\documentclass[12pt]{article}
\usepackage[utf8x]{inputenc}
\usepackage{natbib}

\usepackage{authblk}
\usepackage{amsthm,amsmath,amsfonts,amssymb}
\usepackage[hidelinks]{hyperref}
\usepackage[capitalise]{cleveref}
\usepackage{graphicx}
\usepackage{subcaption}
\usepackage{xcolor}
\graphicspath{{Fig/}}
\usepackage[a4paper, margin=1in]{geometry}

\usepackage{algorithm}
\usepackage{algpseudocode, float}

\usepackage{tikz}
\usetikzlibrary{arrows.meta,patterns,decorations.pathreplacing}

\usepackage{caption}       

\usepackage{booktabs}

\newtheorem{theorem}{Theorem}%
\newtheorem{proposition}{Proposition}%
\newtheorem{lemma}{Lemma}
\newtheorem{corollary}{Corollary}
\newtheorem{example}{Example}%
\newtheorem{remark}{Remark}%
\newtheorem{definition}{Definition}

\makeatletter
\def\lstAZ{A, B, C, D, E, F, G, H, I, J, K, L, M, N, O, P, Q, R, S, T, U, V, W, X, Y, Z}
\def\lstaz{a, b, c, d, e, f, g, h, i, j, k, l, m, n, o, p, q, r, s, t, u, v, w, x, y, z}
\def\lstAZBB{B, C, D, E, F, G, H, I, J, K, L, M, N, O, Q, R, T, U, V, W, X, Y, Z}
\newcommand{\MkScr}[1]{\expandafter\def\csname s#1\endcsname{\mathscr{#1}}}
\newcommand{\MkUp}[1]{\expandafter\def\csname u#1\endcsname{\mathrm{#1}}}
\newcommand{\MkFrak}[1]{\expandafter\def\csname f#1\endcsname{\mathfrak{#1}}}
\newcommand{\MkCal}[1]{\expandafter\def\csname c#1\endcsname{\mathcal{#1}}}
\newcommand{\MkBB}[1]{\expandafter\def\csname #1#1\endcsname{\mathbb{#1}}}

\newcommand{\dee}{\mathrm{d}}
\def\bone{\mathbf{1}}
\def\Ind{\bone}
\newcommand{\Reals}{\mathbb{R}}

\DeclareMathOperator{\AC}{AC}
\DeclareMathOperator{\HS}{HS}
\DeclareMathOperator{\DS}{DS}

\@for\i:=\lstAZ\do{%
	\expandafter\MkScr \i  %
	\expandafter\MkFrak \i  %
	\expandafter\MkUp \i %
	\expandafter\MkCal \i  %
		  }    
\@for\i:=\lstaz\do{%
	\expandafter\MkUp \i   }    
	
\@for\i:=\lstAZBB\do{%
	\expandafter\MkBB \i     }
\makeatother

\providecommand{\keywords}[1]
{
  \small	
  \textbf{\textit{Key words---}} #1
}

\newcommand{\tran}{^\intercal}
\newcommand{\Var}[2][]{\operatorname{Var}_{#1}\left[#2\right]}

\title{Sub-Cauchy Sampling: \\Escaping the Dark Side of the Moon}
\author[1]{Sebastiano Grazzi}
\author[2]{Sifan Liu}
\author[3]{Gareth O.~Roberts}
\author[4]{Jun Yang}
\affil[1]{Department of Decision Sciences and BIDSA, Bocconi University}
\affil[2]{Department of Statistical Science, Duke University}
\affil[3]{Department of Statistics, University of Warwick}
\affil[4]{Department of Mathematical Sciences, University of Copenhagen}

\date{}

\begin{document}
\maketitle

\begin{abstract}
We introduce a Markov chain Monte Carlo algorithm based on Sub-Cauchy Projection,  
a geometric transformation that generalizes stereographic projection by mapping Euclidean space into a spherical cap of a hyper-sphere, referred to as the complement of the dark side of the moon\footnote{term chosen in homage to Pink Floyd’s \emph{The Dark Side of the Moon} (1973).}. 
We prove that our proposed method is uniformly ergodic for sub-Cauchy targets, namely targets whose tails are at most as heavy as a multidimensional Cauchy distribution, and show empirically its performance for challenging high-dimensional problems.
The simplicity and broad applicability of our approach open new opportunities for Bayesian modeling and computation with heavy-tailed distributions in settings where most existing methods are unreliable. 
\end{abstract}
\keywords{Heavy-tailed distribution, Metropolis--Hastings, 
Stereographic MCMC, Geometric projection, Robust Bayesian inference.}


\section{Introduction}

\subsection{Sampling heavy-tailed distributions}

Heavy-tailed distributions are characterized by tails that decay polynomially rather than exponentially, thereby assigning much larger probabilities to extreme values compared to light-tailed distributions such as the exponential or Gaussian. This class of distributions arises naturally in many fields, including finance, insurance, and climate science, where extreme events occur more frequently and have large impact. For such models, the likelihood is typically heavy-tailed as a function of both the data and the model parameters, which frequently induces a heavy-tailed posterior distribution. Heavy-tailed posteriors also arise in binary regression problems under separation \citep{gelman2008weakly, ghosh2018use} and in robust estimation.  As originally noted by \citet{de1961bayesian}, for heavy-tailed models, the contribution of extreme observations to the first two moments of the posterior distribution (when they exist) becomes negligible as the observation magnitude increases. This observation was later formalized in a general setting within the framework of \emph{Bayesian conflict resolution}, which studies the rather subtle and delicate interplay between prior and likelihood specification in the presence of outliers.
See \citet{o2012bayesian} for a comprehensive review and, e.g., \citet{gagnon_theoretical_2023, de2025robust} and references therein for successful applications of robust Bayesian estimation with heavy-tailed distributions.

Despite these modeling advances, many challenges remain on the computational side. For heavy-tailed posteriors, most popular  Metropolis--Hastings algorithms are not geometrically ergodic \citep{jarner2000geometric, livingstone2019geometric, wang2025stereographic}. For specific statistical models (particularly regression problems), the most popular computational approach is based on Gibbs sampling with an appropriate data augmentation scheme \citep{polson2013bayesian, ghosh2018use, liu2004robit, gagnon_theoretical_2023}. These algorithms can be applied only when simulating from the full conditional distributions is possible, thereby restricting the set of models and priors. Furthermore, convergence guarantees of the Gibbs sampler for heavy-tailed targets are limited and empirical studies suggest slow mixing \citep{ghosh2018use}.

As a consequence, the lack of broadly applicable and reliable sampling methods has limited the use of heavy-tailed models in Bayesian practice, despite their well-recognized statistical advantages. This paper directly addresses this gap by providing a simple, principled, and generally applicable approach for sampling from heavy-tailed posterior distributions. 
\subsection{Overview of our contributions}

At a high level, the proposed Markov chain Monte Carlo (MCMC) algorithm proceeds by first mapping the original state space $\RR^d$ onto a spherical cap of a unit sphere, thereby transforming the target distribution on $\RR^d$ into a distribution supported on a subset of the sphere. A new random-walk-type Metropolis algorithm is then used to sample the target on the spherical cap, and the resulting samples are mapped back to the original space $\RR^d$ via the inverse transformation.

The inverse mapping from the spherical cap to $\RR^d$ is termed the \emph{Sub-Cauchy Projection} (SCP). To describe it precisely, we place a $d$-dimensional unit sphere, which lives in $\RR^{d+1}$, \emph{above} the $d$-dimensional hyperplane 
$
\{ (y, 0) : y \in \RR^d \} \subset \RR^{d+1}.
$
The hyperplane is tangent to the sphere at the origin \( (0, \dots, 0)  \in \RR^{d+1} \), which corresponds to the south pole of the sphere. An \emph{observer} \( o  \in \RR^{d+1}\) is placed \emph{inside} this sphere, and the portion of the spherical surface lying above the observer is removed. The removed spherical cap is hereafter referred to as the \emph{dark side of the moon}, and its complement is called the \emph{bright side}. 
For a point $x$ on the bright side of the sphere, the Sub-Cauchy Projection \( y = \mathrm{SCP}_o(x) \in \RR^d \) is defined as the intersection of the line passing through $o$ and $x$ with the hyperplane $\{(y,0):y\in\RR^d\}$.
See \cref{fig: illustration projection} for an illustration with two different choices of the observer $o$.
 
For every observer \( o \), the map \( \mathrm{SCP}_o\) is a smooth and bijective map from the bright side to $\RR^d$ (a diffeomorphism), although it does not preserve distances and volumes. When \( o = (0, \dots, 0, 2) \), corresponding to the north pole of the sphere, we recover the classical stereographic projection\footnote{Stereographic projection can also be defined with the sphere centered at the origin (see \cite[Fig.~2]{yang2022stereographic}). These two definitions are equivalent up to a rescaling constant.} \citep[see e.g.][]{coxeter1961introduction}, and the dark side degenerates to the singleton \( o \). For any other choice of \( o \) inside the sphere, however, the Sub-Cauchy Projection differs from the classical stereographic projection in two important ways:
\begin{itemize}
    \item The dark side is a $d$-dimensional region, rather than a point, and points at infinity in $\RR^d$ are mapped to its $(d-1)$-dimensional boundary. This difference is reflected in the Jacobian of the transformation (\cref{rmk: jacobian}) and enables SCP to accommodate target distributions with heavier tails than those that can be handled by classical stereographic projection (\cref{rmk: limitation SPS}).
  
    \item The observer $o$ acts as a tuning parameter that provides additional flexibility beyond classical stereographic projection. In particular, by varying the location of $o$, the map $\mathrm{SCP}_o$ can be skewed toward certain direction, making it well suited for sampling from asymmetric or skewed heavy-tailed distributions.
  
\end{itemize}

Our proposed algorithm, termed the \emph{Sub-Cauchy Projection Sampler} (SCS), is a random-walk-type Metropolis algorithm on the bright side. 
Our main theoretical result establishes that SCS is \emph{uniformly ergodic} for any $d$-dimensional \emph{sub-Cauchy distributions}, that is, distributions with tails which are at most as heavy as a $d$-dimensional Cauchy distribution, see \cref{def: sub-cauchy}. To the best of our knowledge, SCS is the first general-purpose Metropolis-adjusted algorithm with provable uniform ergodicity guarantees for multidimensional sub-Cauchy distributions. 
\begin{definition}\label{def: sub-cauchy}
A distribution $\pi$ in $\RR^d$ is sub-Cauchy if 
    \[
    \sup_{y\in \mathbb{R}^d}\pi(y)(1+\|y\|^2)^{(d+1)/2}<\infty.
    \]    
\end{definition}

Beyond the theoretical guarantees, we provide practical guidelines for selecting the observer parameter $o$, along with additional tuning parameters such as shifting and rescaling constants. We further demonstrate the performance of SCS on a range of high-dimensional heavy-tailed target distributions, including examples arising from robust Bayesian binary regression, where data-augmentation Gibbs samplers are known to mix slowly \citep{ghosh2018use}. In these settings, SCS converges much faster, provides more accurate exploration of the posterior tails, and achieves greater computational efficiency than commonly used alternatives such as Hamiltonian Monte Carlo.

\begin{figure}
    \centering
\includegraphics[width=0.4\linewidth]{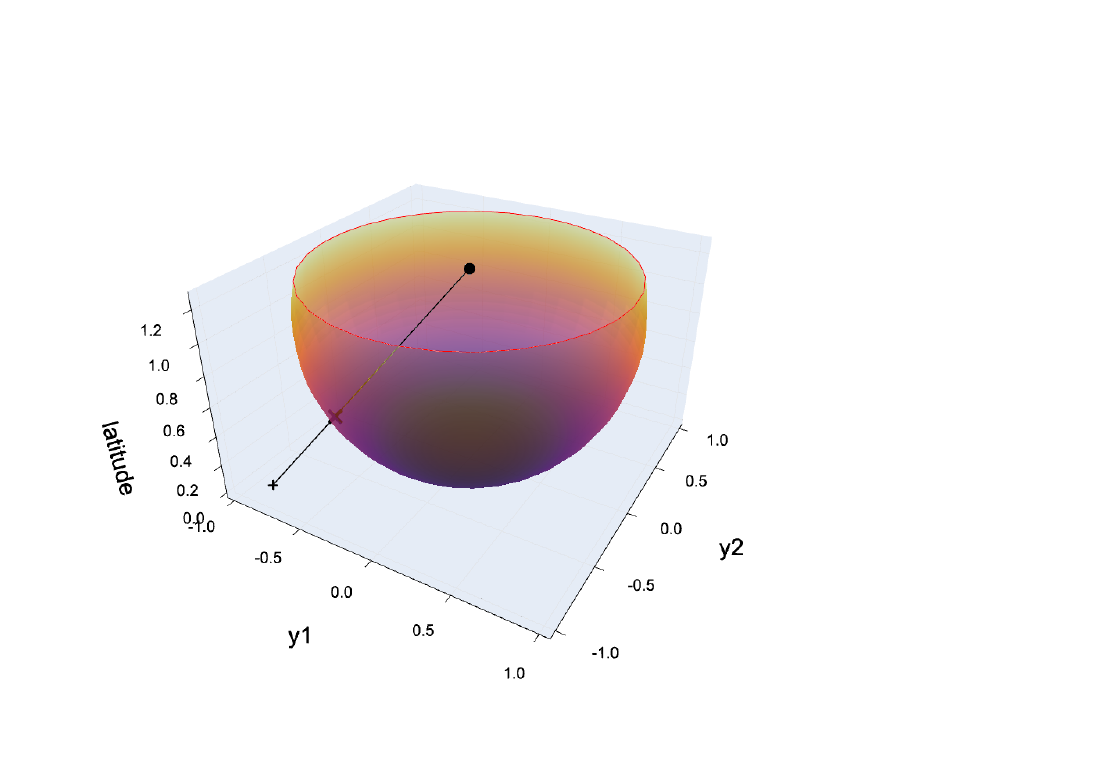}
\includegraphics[width=0.4\linewidth]{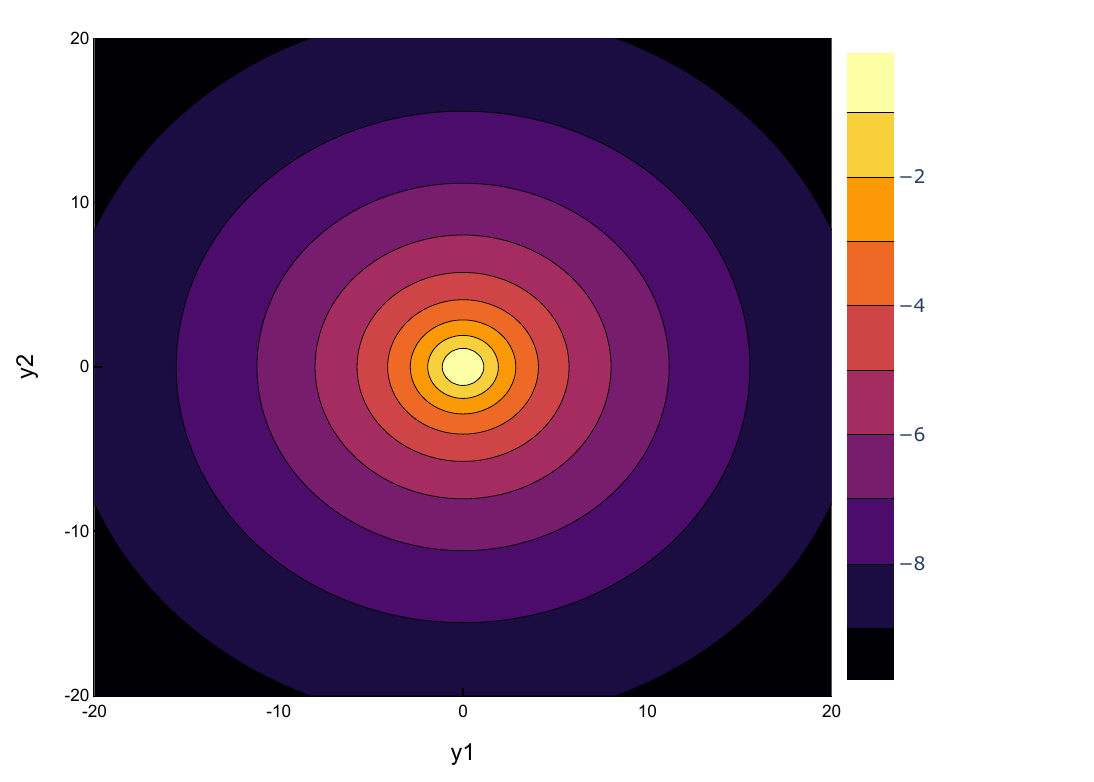}
\includegraphics[width=0.4\linewidth]{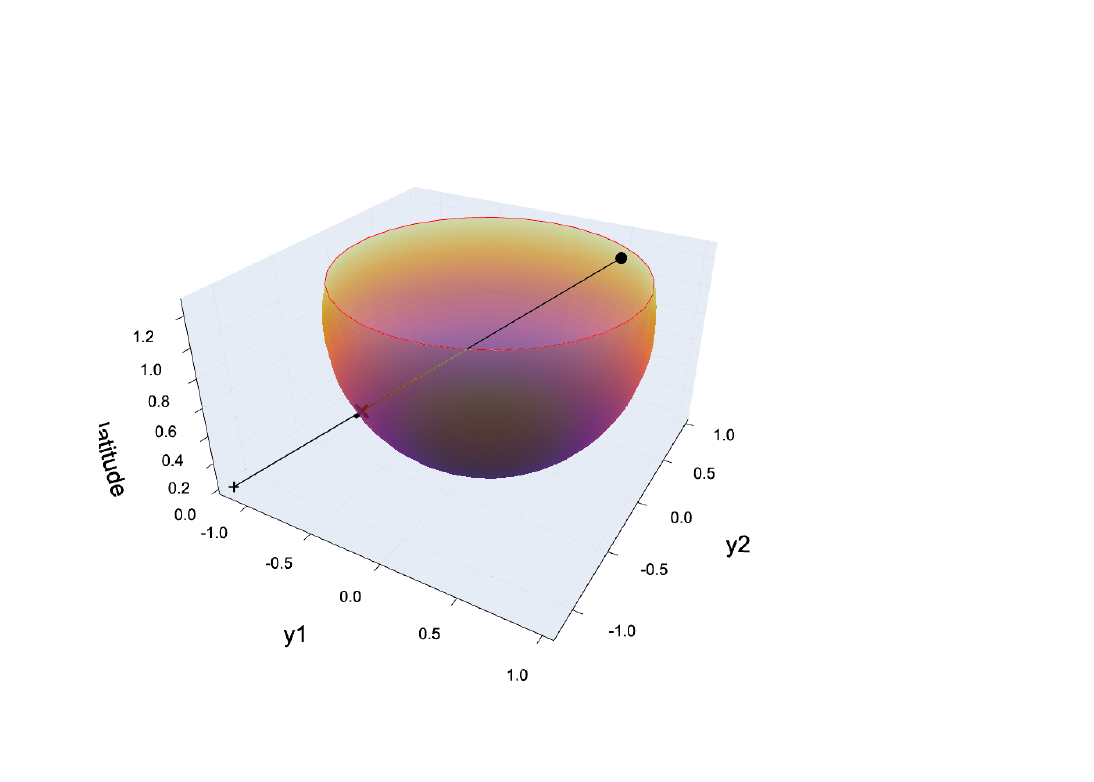}
\includegraphics[width=0.4\linewidth]{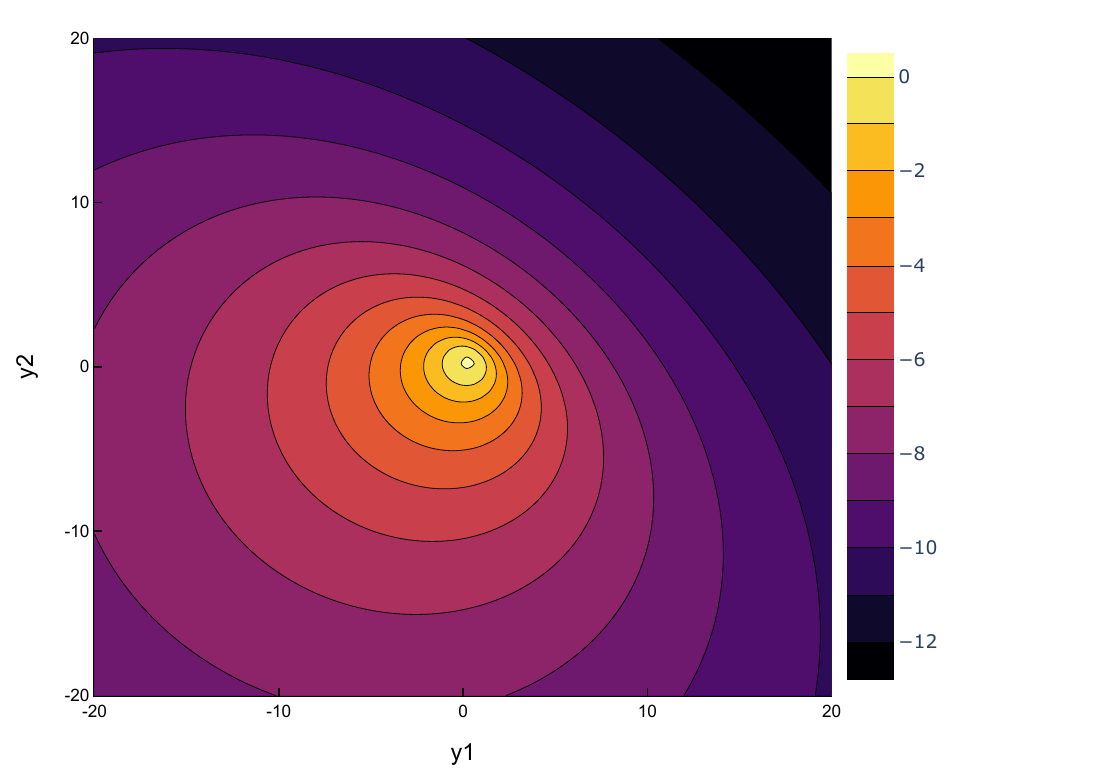}
    \caption{Illustration of a 2-dimensional Sub-Cauchy Projection with $o = (0, 0, 1.5)$ (top panels) and $o = (0.6, 0.6, 1.5)$ (bottom panels). Right panels: contours of log density of the uniform distribution on the bright side of the sphere projected onto $\RR^2$.}
    \label{fig: illustration projection}
\end{figure}

\subsection{Illustration on the Cauchy distribution}\label{sec: dark side of the moon}

Before going into the details of our method, we present an illustrative example to demonstrate the advantages of SCS and to explain the underlying intuition.
We compare the performance of  SCS with other popular MCMC methods for a prototypical example of a heavy-tailed target distribution: the $100$-dimensional Cauchy distribution. For this target, the Stereographic Projection Sampler (SPS; \citealp[Algorithm 1]{yang2022stereographic}) and standard MCMC algorithms such as random-walk Metropolis (RWM), Metropolis adjusted Langevin algorithm (MALA), and Hamiltonian Monte Carlo (HMC) fail to be geometrically ergodic \citep{yang2022stereographic, 
 jarner2000geometric,
 roberts1996geometric, livingstone2019geometric}, resulting in slow mixing of the corresponding Markov chains. \cref{fig:dark_side2} compares the trajectories  of our SCS (with parameters chosen as in \cref{sec: tuning parameters}), SPS, together with other popular algorithms such as RWM and HMC. The left panel shows the traces of the squared norm started at stationary and the middle panel shows the first two coordinates when the chains are started at the tails. In the latter case, SPS is stuck at initialization, and we therefore omit its output. The empirical acceptance probability is approximately 0.234 for SCS, SPS and RWM, and 0.7 for HMC, as prescribed by the literature. 
 
 There is a simple geometric reason explaining the different behavior of SCS and SPS, with the latter corresponding to SCS when the observer $o$ is placed at the north pole. For the stereographic projection (SP), the transformed density on the surface of the sphere is unbounded near the north pole, because the tails of the multivariate Cauchy density are too heavy. In contrast, for any other choice of $o$ inside the sphere, the density induced by SCP remains bounded on the bright side; see \cref{fig:dark_side1} for an illustration in 2 dimensions. This seemingly subtle difference between the two projections has important consequences: when initialized in the tails, SPS tends to become trapped near the north pole (the dark side of the moon), whereas SCS escapes the dark side and is able to efficiently explore the target distribution.

\begin{figure}[ht!]
    \centering
    \includegraphics[width=0.32\linewidth]{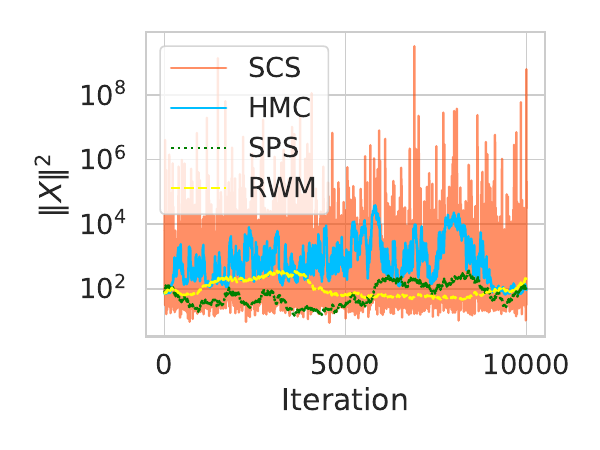}
    \includegraphics[width=0.32\linewidth]{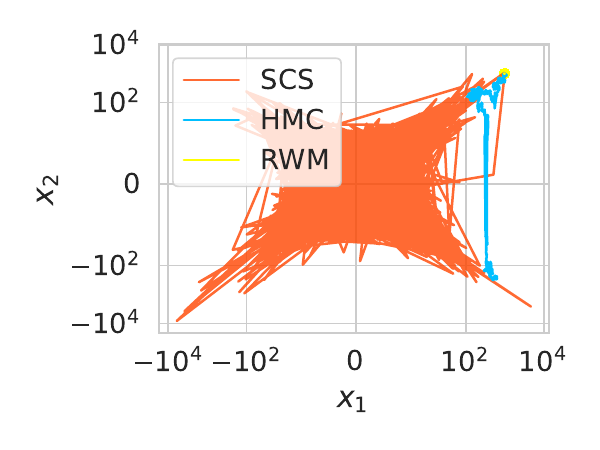}
    \includegraphics[width=0.32\linewidth]{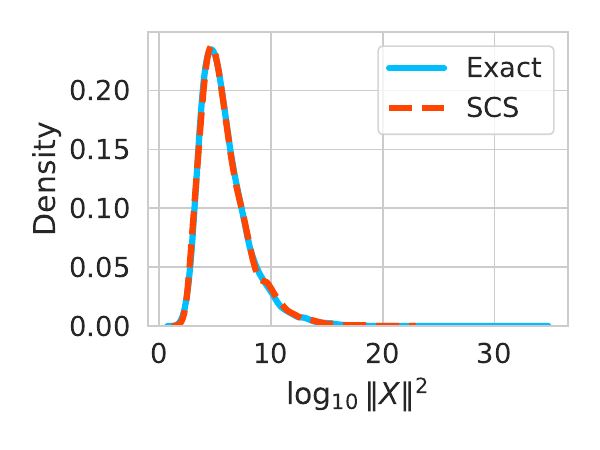}
    \caption{Sampling from a 100-dimensional standard Cauchy distribution. 
    Left panel: trace plot of $\|X\|^2$ of SCS (orange),  HMC (blue), SPS (green), and RWM (yellow), initializing from a sample drawn from the target.
    Middle panel: trajectory in the first two dimensions (on the log scale) of SCS (orange),  HMC (blue), and RWM (yellow), initializing from $(10^3, 10^3,\ldots)$. 
    Right panel: density of the SCS samples of $\|X\|^2$ (red, estimated by kernel density regression), compared against the true density (blue). 
    }
    \label{fig:dark_side2}
\end{figure}
\begin{figure}[ht!]
    \centering
    \includegraphics[width=0.8\linewidth]{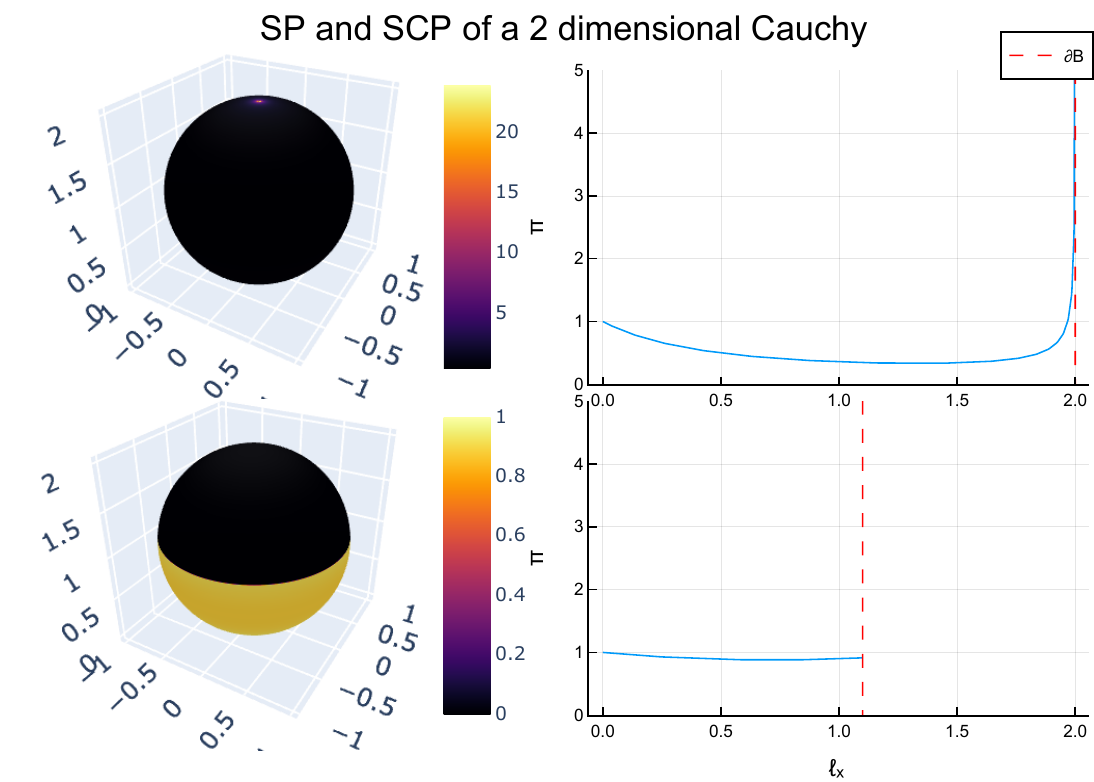}
    \caption{SP (top panels) and SCP with observer $o=(0,\, 0,\, 1.1)$ (bottom panels) of 2-dimensional Cauchy distribution. Right panels correspond to the unnormalized density evaluated along the latitude component $\ell_x$. The density on the sphere peaks to infinity on the north pole for SP, while it is bounded from above and below for SCP.
     }
    \label{fig:dark_side1}
\end{figure}

\subsection{Related work}
Our work overcomes current limitations of popular MCMC methods, which are not geometrically ergodic for heavy-tailed targets \cite[Sec.~4]{jarner2000geometric} \cite[Thm.~2.2]{livingstone2019geometric},\cite[Prop.~2]{wang2025stereographic}.  Some transformation strategies have been proposed to regain geometric/uniform ergodicity \citep{johnson2012variable, yang2022stereographic}, with the former not recovering uniform ergodicity and the latter not suitable for $d$-dimensional distributions without finite $(d-1)$-th moment \citep[Thm 2.1 and Thm 2.2]{yang2022stereographic}\citep[Sec 3.1.3]{brevsar2025central}. The past years have seen a growing interest in sampling algorithms for heavy-tailed distributions. For example, \citep{he2022heavy, mousavi2023towards,  vasdekis2023speed, he2024mean, bertazzi2025sampling} derived theoretical guarantees for Langevin dynamics and piecewise deterministic Markov processes. However, existing analyses focus either on the continuous time dynamics or the so-called unadjusted setting, which introduces a bias to the target that is difficult to quantify in practical scenarios. Our focus is on Metropolis-adjusted algorithms for heavy-tailed target distributions, for which mostly negative results have been derived \citep{roberts1996geometric,jarner2000geometric, livingstone2019geometric}, although
\cite{jarner2007convergence} showed that the use of heavy-tailed MCMC proposal distributions can mitigate slow polynomial rates of convergence to some extent.

\subsection{Outline and notation}
The remainder of the paper is organized as follows. \cref{sec: SC Projection} provides a detailed description of the Sub-Cauchy Projection. 
\cref{sec: SC sampler} introduces our proposed algorithm, Sub-Cauchy Projection Sampler (SCS). \cref{sec: theory} establishes the uniform ergodicity of the proposed sampler for sub-Cauchy distributions. \cref{sec: tuning parameters} describes how the tuning parameters of the proposed method are automatically calibrated. 
\cref{sec: numerics} numerically tests our algorithm for complex heavy-tailed high-dimensional distributions. Code for reproducing the numerical results is available at \url{https://github.com/liusf15/sub-cauchy-sampler}. More technical proofs are deferred to Appendix~\ref{sec:appendix}.

\textbf{Notation.} 
For a vector $z \in \RR^{d+1}$, let $h_z := z_{1:d} \in \RR^d$ and $\ell_z := z_{d+1} \in \RR$ be respectively the horizontal component (longitude) and the vertical component (latitude) of $z$. For \( d \ge 1 \) and \( s \in \RR^{d+1} \), let  
$\overline{\cB}^{d+1}(s) := \{ x \in \RR^{d+1} : \|x - s\| \le 1 \}$, $ 
\cS^{d}(s) := \{ x \in \RR^{d+1}: \|x - s\| = 1 \}$, 
and 
$\cB^{d+1}(s) := \overline{\cB}^{d+1}(s) \setminus  \cS^{d}(s).
$ 
For a fixed \( o \in \cB^{d+1}(s) \), the bright side of the sphere $\cS^{d}(s)$ is defined as
\begin{equation*}
    \cC^{d}_o(s) := \{x \in \cS^{d}(s) \colon x_{d+1} < o_{d+1} \}.
\end{equation*}
When $s = (0,0,\dots,0)$, we omit the argument and simply write $\cB^{d+1},\, \cS^{d}, \, \cC^{d}_o$. For all $1\leq i\leq d+1$, $e_i$ denotes $i$th standard basis vector in $\RR^{d+1}$.

\section{Sampling sub-Cauchy distributions}
\label{sec: Methodology}

\subsection{Sub-Cauchy Projection}\label{sec: SC Projection}
Let the observer be $o \in \cB^{d+1}(e_{d+1})$.  The $d$-dimensional Sub-Cauchy Projection  
$
\mathrm{SCP}_o : \cC^{d}_o(e_{d+1}) \to \RR^d
$
maps any point \( x \in \cC^{d}_o(e_{d+1}) \) to \( y \in \RR^d \) such that \( x, o \), and \( \bar y= (y, 0) \in \RR^{d+1} \) are colinear, that is,
\[
x = t\, o + (1 - t)\, \bar y, 
\quad \text{for some } t \in [0,1].
\]
By further shifting and rescaling \( y \) as \( yR + \mu \), where \( \mu \in \RR^d \) and \( R > 0 \), we generalize the projection to spheres of radius \( R \) centered at \( (\mu, R) \). 
We denote this projection as $\mathrm{SCP}_\theta$, where $\theta=(o,\mu,R)$.
Explicit expressions of the Sub-Cauchy Projection, its inverse, and its Jacobian are given below. The derivation is provided in Appendix~\ref{sec: forward_backward_jacobian}.
\begin{proposition} \label{prop:bijection_and_jacobian}
Fix $\theta = (o ,\,\mu, \, R)$, where $o \in \cB^{d+1}(e_{d+1}),\,\mu \in \RR^d$, and $R > 0$. 
The Sub-Cauchy Projection takes the form 
\begin{equation*}
y = \mathrm{SCP}_\theta(x) = R\left(\frac{\ell_o}{\ell_o-\ell_x}h_x - \frac{\ell_x}{\ell_o - \ell_x} h_o\right) + \mu, \qquad x \in \cC_o^{d}(e_{d+1}),
\end{equation*}
with inverse
\begin{equation*}
\begin{pmatrix}
    h_x\\ 
    \ell_x
\end{pmatrix}
     = \mathrm{SCP}^{-1}_\theta(y) = 
    \begin{pmatrix}
         M \hat y + (1-M) h_o\\
         (1-M)\ell_o
     \end{pmatrix}, \qquad \hat y = (y - \mu)/R, \qquad y \in \RR^d,
\end{equation*}
where
\begin{align}
    \label{eq: M}
    M = \frac{ -\big(\langle \hat y - h_o, h_o \rangle - \ell_o(\ell_o - 1)\big) + \sqrt{ \big(\langle \hat y - h_o, h_o \rangle - \ell_o(\ell_o - 1)\big)^2 - (\|\hat y - h_o\|^2 + \ell_o^2)(\|h_o\|^2 + \ell_o^2 - 2\ell_o) } }{ \|\hat y - h_o\|^2 + \ell_o^2}.
\end{align}
The Jacobian of the forward map $\mathrm{SCP}_\theta$ is 
\begin{align}
\label{eq: jacobian}
    J_\theta(y) 
     &= R^d\left(\frac{ M \|\hat y - h_o\|^2  + \langle \hat y - h_o, h_o\rangle  + \ell_o - \ell_o^2 (1-M)}{M^d\ell_o}\right).
\end{align}
\end{proposition}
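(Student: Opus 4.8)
The plan is to establish the three formulas in sequence: the forward and inverse maps by elementary linear algebra, and the Jacobian by a geometric area-distortion argument, which I expect to be the only substantive computation. For the forward map I would read the colinearity relation $x = t\,o + (1-t)\bar y$ coordinate by coordinate. The latitude coordinate gives $\ell_x = t\ell_o$, hence $t = \ell_x/\ell_o$ (here $\ell_o>0$ since $o$ lies strictly above the south pole), and substituting into the longitude coordinates $h_x = t h_o + (1-t)y$ and solving for $y$ returns $(\ell_o h_x - \ell_x h_o)/(\ell_o-\ell_x)$; composing with the affine map $y\mapsto Ry+\mu$ yields the stated $\mathrm{SCP}_\theta$.

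For the inverse I would parametrise the ray as $x = (1-M)o + M\bar{\hat y}$ with $\bar{\hat y}=(\hat y,0)$ and $M=1-t$, which immediately gives $h_x = M\hat y + (1-M)h_o$ and $\ell_x = (1-M)\ell_o$. Imposing the sphere constraint $\|h_x\|^2 + (\ell_x-1)^2 = 1$ produces a quadratic $aM^2 + bM + c = 0$ with $a = \|\hat y - h_o\|^2 + \ell_o^2$, $b = 2(\langle \hat y - h_o, h_o\rangle - \ell_o(\ell_o-1))$ and $c = \|h_o\|^2 + \ell_o^2 - 2\ell_o$, whose solution is \eqref{eq: M} up to the sign in front of the radical. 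The decisive point is that $c = \|o - e_{d+1}\|^2 - 1 < 0$ because $o$ lies strictly inside the sphere; hence $ac < 0$, the discriminant is automatically positive, the two real roots have opposite signs, and the bright side $\{\ell_x < \ell_o\}$ is exactly $\{M>0\}$, which singles out the $+$ root. This argument simultaneously shows that the inverse is well defined, takes values in $\cC^d_o(e_{d+1})$, and (being the positive root of a quadratic with smooth coefficients, nonvanishing leading term and strictly positive discriminant) depends smoothly on $y$, so that $\mathrm{SCP}_\theta$ is a diffeomorphism.

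For the Jacobian, since $\mathrm{SCP}_\theta$ is the base projection composed with $y\mapsto Ry+\mu$, the factor $R^d$ splits off and it remains to find the Lebesgue-to-surface distortion of the base central projection through $o$. Writing $\hat r$ for the common unit direction of $x-o$ and $\bar{\hat y}-o$, $n_1 = e_{d+1}$ for the plane normal, and $n_2 = x - e_{d+1}$ for the unit outward sphere normal at $x$, the area rule for a central projection, obtained by comparing solid angles subtended at $o$, reads
\[
\frac{\dd y}{\dd \sigma} = \left(\frac{\|\bar{\hat y} - o\|}{\|x - o\|}\right)^{\!d}\frac{|\langle \hat r, n_2\rangle|}{|\langle \hat r, n_1\rangle|}.
\]
Because $x - o = M(\bar{\hat y}-o)$, the distance ratio is $M^{-d}$ and $\|x-o\| = M\|\bar{\hat y}-o\|$; moreover $|\langle\hat r, n_1\rangle| = \ell_o/\|\bar{\hat y}-o\|$ and $|\langle\hat r, n_2\rangle| = |\langle x-o, x-e_{d+1}\rangle|/\|x-o\|$, so the two occurrences of $\|\bar{\hat y}-o\|=\sqrt a$ cancel — which is precisely why no square root survives — leaving $\dd y/\dd\sigma = |\langle x-o, x-e_{d+1}\rangle|/(M^{d+1}\ell_o)$.

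Finally I would expand $\langle x-o, x-e_{d+1}\rangle$ using $x-o = M(\hat y - h_o,\,-\ell_o)$ and $x - e_{d+1} = (M(\hat y-h_o)+h_o,\,(1-M)\ell_o - 1)$; the inner product factors as $M$ times $M\|\hat y - h_o\|^2 + \langle \hat y - h_o, h_o\rangle + \ell_o - \ell_o^2(1-M)$, which is positive on the bright side (allowing the absolute values to be dropped), cancels one power of $M$, and reproduces the numerator of \eqref{eq: jacobian} after multiplying by $R^d$. The delicate parts are thus (i) justifying the central-projection area rule rigorously — equivalently, computing the Gram determinant $\det(D\Phi^\intercal D\Phi)$ of the inverse parametrisation $\Phi = \mathrm{SCP}^{-1}_\theta$, which equals $M^2 I$ plus a rank-two perturbation and collapses to a perfect square via the matrix-determinant lemma — and (ii) the closing algebraic simplification, in which the sphere relation $aM^2 + bM + c = 0$ is exactly what makes the terms collapse to the stated closed form.
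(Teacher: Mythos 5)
Your proposal is correct and follows essentially the same route as the paper: colinearity read off coordinate-wise for the forward map, the same sphere-constraint quadratic $aM^2+bM+c=0$ (with identical coefficients) for the inverse, and the same central-projection area-distortion rule for the Jacobian --- the paper's factor $\left(\frac{a+b}{a}\right)^d\frac{\cos\theta_1}{\cos\theta_2}$ is exactly your distance-ratio times normal-cosine ratio, and both expand $\langle x-o,\,x-e_{d+1}\rangle$ to reach \eqref{eq: jacobian}. If anything, your root-selection argument (noting $c=\|o-e_{d+1}\|^2-1<0$, hence $ac<0$ and exactly one positive root) is slightly more careful than the paper, which simply asserts the quadratic has one positive solution.
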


The Jacobian greatly simplifies when the observer is centered in its longitudinal component, that is when $h_o = (0,0,\dots,0)\in\RR^d$, particularly in the two special cases below, where Example~\ref{example 2} is obtained by extending  $o \in \cB^{d+1}(e_{d+1})$ to also include  the boundary point $(0,0,\dots,0,2)\in\RR^{d+1}$.
\begin{example}(Cauchy projection)\label{example 1}
If $h_o = (0,0,\dots,0)\in\RR^d$ and $\ell_o = 1$, then
\begin{equation}
    \label{eq:jacobian_subcauchy}
    J_\theta(y) = R^d( \|\hat y\|^2  + 1)^{(d+1)/2}, \quad \text{where } \hat y = (y -\mu)/R.
\end{equation}
In this case, $J_\theta(y)^{-1}$ is proportional to the density of a standard $d$-dimensional Cauchy distribution (when $\mu = (0,0,\dots,0),\, R = 1$). Therefore, the inverse of $\mathrm{SCP}_\theta$ maps the $d$-dimensional Cauchy distribution into the uniform distribution on the bright side of moon.
\end{example} 
\begin{example}(Stereographic projection) \label{example 2}
 If $h_o = (0,0,  \dots, 0)\in\RR^d$ and $\ell_o = 2$, then we recover the classical stereographic projection with 
\begin{equation}
    \label{eq:jacobian_stereographic}
    J_\theta(y) = R^d\left(\frac{\|\hat y\|^2 + 4}{4}\right)^d, \quad \hat y = (y -\mu)/R.
\end{equation}
Here $J_\theta(y)^{-1}$ is proportional to the density of a standard multivariate student's $t$ with degrees of freedom $d$ (when $\mu = (0,0,\dots,0),\, R = \sqrt{d}/2$). 
\end{example}
\begin{remark}\label{rmk: jacobian}
   The inverse of Jacobian in \cref{eq:jacobian_subcauchy} decays slower than that in  \cref{eq:jacobian_stereographic} by a factor of $\|\hat y\|^{d-1}$ as $\|y\|\to\infty$. This allows us to improve the uniform ergodicity result in \cite[Theorem 2.1]{yang2022stereographic} for targets that have sub-Cauchy tails.
\end{remark}

\subsection{Sub-Cauchy Projection Sampler}\label{sec: SC sampler}
We detail here our proposed Sub-Cauchy Projection Sampler (SCS). SCS is a random-walk-type Metropolis algorithm on the bright side of the moon. The Metropolis acceptance--rejection ensures that the chain satisfies the detailed-balance condition and is $\pi$-invariant. Special care must be taken near the boundaries, as proposals landing on the ``dark side'' (i.e. above the observer $o$) are re-located on the the ``bright side'', without breaking symmetries and detailed-balance. For simplicity, we define the algorithm on unit spheres centered at 0.

For a given point $x \in \cS^{d}$, a new point $x'$ is first proposed by sampling a Gaussian random variable on the tangent space of the unit sphere at $x$ and projecting it back onto $\cS^{d}$; see \cref{alg:rwm proposal}. This initial proposal coincides with the SPS proposal in \citep[Algorithm 1]{yang2022stereographic} and allows us to leverage existing theoretical results. If the initial proposed point  $x'$ lands on the bright side, we let $x^{\star}=x'$. Otherwise, $x'$ lands in the dark side, i.e., $\ell_{x'} > \ell_o - 1$, we bring it back to the bright side with a \emph{stepping-out} function given by
\begin{equation}
    \label{eq: stepping-out}
    x^{\star}:=\cos(K\alpha)x + \sin(K\alpha)u,
\end{equation}
in which 
$
{K:=\arg\min_k \left\{k\in\mathbb{N}: k\alpha> \phi+\gamma\right\}}$, $u:=(x'-(\langle x, x'\rangle )x)/\sqrt{1-(\langle x, x'\rangle )^2}$, \sloppy{$\alpha=\arccos(\langle x,  x'\rangle),$}
and where
$
\phi:=\arccos(x_{d+1}/\sqrt{x_{d+1}^2+u_{d+1}^2})$,  $\gamma:=\arccos(\ell_o-1/\sqrt{x_{d+1}^2+u_{d+1}^2})$; see \cref{fig: stepping out} for an illustration.

Note that the stepping-out step does not incur additional computational cost, since \cref{eq: stepping-out} admits a closed-form expression. 
Throughout, we set $\ell_o \ge 1$, and we show in Appendix~\ref{app: latitude} that for suitable choices of $\ell_o$, the initial proposal $x'$ lies on the bright side with high probability as the dimension increases, so that \cref{eq: stepping-out} is invoked only sporadically. The proposed point is then accepted with probability 
\begin{align}\label{eq: alpha}
\alpha(y, y^\star) = \frac{J_\theta(y^\star) \pi(y^\star)}{J_\theta(y) \pi(y)} \wedge 1,
\end{align}
where $y, y^\star$ are respectively $x, x^\star$ projected in the Euclidean space via $\mathrm{SCP}_\theta$. If the proposed point is rejected, the new state is set to be equal to $x$; see \cref{alg:rwm} for the full description of the transition. 

As the proposal is symmetric on the sphere and the stepping-out function in \cref{eq: stepping-out} is symmetric relative to the proposal density, SCS satisfies the detailed-balance condition and the following proposition holds true. The proof is provided in Appendix~\ref{sec:invariance}.
\begin{proposition}\label{prop:invariance}
    If $\pi(y)$ is positive and continuous in $\mathbb{R}^d$, then SCS gives rise to an ergodic Markov chain in $\mathbb{R}^d$ with invariant distribution $\pi$.
\end{proposition}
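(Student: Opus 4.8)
The plan is to carry out the entire argument on the sphere rather than in $\RR^d$. Since $\mathrm{SCP}_\theta$ is a diffeomorphism from the bright side $\cC^d_o(e_{d+1})$ onto $\RR^d$ (\cref{prop:bijection_and_jacobian}), both invariance and ergodicity are preserved under this change of variables, so it suffices to show that the chain $\{x_n\}$ on $\cC^d_o$ is ergodic with invariant measure $\tilde\pi$, where $\tilde\pi$ has density proportional to $\pi(\mathrm{SCP}_\theta(x))\,J_\theta(\mathrm{SCP}_\theta(x))$ with respect to the surface measure $\sigma$ on $\cS^d$ (this is the pushforward of $\pi$ under $\mathrm{SCP}_\theta^{-1}$, and it reduces to the uniform law exactly in the Cauchy case of Example~\ref{example 1}). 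Writing $y=\mathrm{SCP}_\theta(x)$ and $y^\star=\mathrm{SCP}_\theta(x^\star)$, the acceptance probability \eqref{eq: alpha} is precisely the Metropolis ratio $\bigl(\tilde\pi(x^\star)/\tilde\pi(x)\bigr)\wedge 1$, so the whole problem becomes a Metropolis-type analysis on $\cC^d_o$ with target $\tilde\pi$.

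First I would establish $\tilde\pi$-reversibility by showing that the proposal kernel $Q(x,\cdot)$ on $\cS^d$ is symmetric with respect to $\sigma$, i.e.\ that it admits a density with $q(x,x^\star)=q(x^\star,x)$. This splits into two parts. First, the initial proposal---sampling a Gaussian on the tangent space at $x$ and projecting back to $\cS^d$---coincides with the SPS proposal and is $\sigma$-symmetric \citep{yang2022stereographic}; in particular its density depends on $x,x'$ only through the geodesic angle $\alpha=\arccos\langle x,x'\rangle$. Second, when the initial proposal $x'$ lands on the dark side ($\ell_{x'}>\ell_o-1$) it is relocated by the stepping-out map \eqref{eq: stepping-out}. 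The key observation is that \eqref{eq: stepping-out} acts along the fixed great circle through $x$ with tangent direction $u$: parametrising this circle by the signed angle $\psi$, the latitude is $\ell(\psi)=\sqrt{x_{d+1}^2+u_{d+1}^2}\,\cos(\psi-\phi)$, so the dark-side arc is exactly $\psi\in(\phi-\gamma,\phi+\gamma)$ and the stepping-out is a reflection of the great-circle angle about the crossing point $\psi=\phi+\gamma$. I would verify that this reflection is a measure-preserving involution of the great circle that leaves the angular increment invariant, so the composite map $x'\mapsto x^\star$ preserves the proposal density; combined with the first part this gives $q(x,x^\star)=q(x^\star,x)$. The standard Metropolis computation then yields detailed balance, $\tilde\pi(dx)\,Q(x,dx^\star)\,\alpha=\tilde\pi(dx^\star)\,Q(x^\star,dx)\,\alpha$, and hence $\tilde\pi$-invariance.

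Next I would argue ergodicity. Because $\pi$ is positive and continuous on $\RR^d$ and $J_\theta$ is positive and continuous (\cref{prop:bijection_and_jacobian}), $\tilde\pi$ has a strictly positive continuous density on the interior of $\cC^d_o$. The tangent-space Gaussian has full support, so the initial proposal has a strictly positive density on $\cS^d$ away from a $\sigma$-negligible set, and from any state the chain reaches every open subset of $\cC^d_o$ with positive probability; this gives $\tilde\pi$-irreducibility and aperiodicity, and together with invariance yields ergodicity by standard Markov chain theory. Transporting back through $\mathrm{SCP}_\theta$ gives the stated ergodicity on $\RR^d$ with invariant distribution $\pi$.

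The main obstacle is the symmetry of the stepping-out relocation. One must check that the great-circle reflection is a genuine involution---so that applying the stepping-out from $x^\star$ returns a proposal of $x$ with the same density---and that it introduces no Jacobian factor, so detailed balance holds with the \emph{unmodified} acceptance ratio \eqref{eq: alpha} rather than a corrected one. Care is also needed with the measure-zero configurations (e.g.\ $\langle x,x'\rangle=\pm1$, proposals landing exactly on the boundary latitude $\ell_o-1$, or the threshold defining $K$), which should be dispatched by noting that they occur with probability zero under the Gaussian proposal.
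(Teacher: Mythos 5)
Your overall scheme---symmetry of the effective on-sphere proposal, hence detailed balance and invariance for $\tilde\pi\propto\pi\cdot J_\theta$, then irreducibility/aperiodicity plus standard Markov chain theory for ergodicity---is the same as the paper's, and those outer layers of your argument are fine. The gap is in the one step that carries the real content: your description of the stepping-out relocation is not what \eqref{eq: stepping-out} defines. That map sends the initial proposal at great-circle angle $\alpha$ to the point at angle $K\alpha$, where $K=\min\{k\in\mathbb{N}:k\alpha>\phi+\gamma\}$; that is, it repeats steps of the \emph{same} angular size $\alpha$ until the far crossing point $\phi+\gamma$ is passed (hence the ``$k=2$'' annotation in \cref{fig: stepping out}). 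It is not the reflection $\alpha\mapsto 2(\phi+\gamma)-\alpha$ about the crossing point: for instance, $\alpha=0.4$ and $\phi+\gamma=1$ give $K\alpha=1.2$, not $1.6$. Consequently both properties your plan rests on fail for the actual map: it is not an involution of the great circle, and it is not Jacobian-free. In geodesic polar coordinates at $x$ the surface measure on $\cS^{d}$ has radial factor $\sin^{d-1}(\psi)$, so the radial dilation $\alpha\mapsto K\alpha$ contributes a Jacobian $K\sin^{d-1}(K\alpha)/\sin^{d-1}(\alpha)$, and since the angular increment changes from $\alpha$ to $K\alpha$, the composite map does not ``preserve the proposal density'' pointwise.

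The symmetry is instead recovered by a pairing argument, which is exactly what the paper's \cref{lem:SCP} is set up to deliver. Write $\beta=K\alpha$ for the angle from $x$ to $x^\star$. Viewed from $x^\star$ along the same great circle toward $x$, the latitude profile has the same amplitude $A=\sqrt{x_{d+1}^2+u_{d+1}^2}$, so the dark arc seen from $x^\star$ is $(\beta-\phi-\gamma,\,\beta-\phi+\gamma)$: same half-width $\gamma$, peak angle $\beta-\phi$. For each $k\ge 2$, one checks that the proposal at angle $\beta/k$ from $x$ is relocated to $x^\star$ in exactly $k$ steps if and only if (up to null sets) the proposal at angle $\beta/k$ from $x^\star$ is relocated to $x$ in exactly $k$ steps; both conditions are equivalent to $\max(\phi-\gamma,\,\beta-\phi-\gamma)<\beta/k<\min(\phi+\gamma,\,\beta-\phi+\gamma)$, which is manifestly symmetric under exchanging $\phi\leftrightarrow\beta-\phi$. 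Each such $k$ contributes the identical quantity $q(\beta/k)\sin^{d-1}(\beta/k)/(k\sin^{d-1}\beta)$ to the two effective proposal densities, where $q(\cdot)$ is the angular density of the initial proposal, and the direct terms $q(\beta)$ agree by symmetry of the SPS proposal; summing over $k$ gives $q_{\mathrm{eff}}(x,x^\star)=q_{\mathrm{eff}}(x^\star,x)$, which is the statement that licenses the unmodified acceptance ratio \eqref{eq: alpha}. With this repair, your invariance and irreducibility/aperiodicity steps go through as in the paper; note only that to conclude ergodicity from \emph{every} starting point (not merely $\pi$-almost every one), the paper additionally establishes positive Harris recurrence via Tierney's Corollary 2 before applying Meyn--Tweedie, a step your sketch should cite rather than fold into ``standard Markov chain theory.''
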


\begin{algorithm}
        \caption{Random-walk proposal on unit sphere } \label{alg:rwm proposal}
        \begin{algorithmic}[1]
        \Require Current state $x\in\cS^{d}$, step-size $h > 0$. 
        \State Sample $\tilde{\delta}  \sim \cN(0, h^2 I_{d+1})$
        \State Set $\delta = \tilde \delta - \langle x, \tilde{\delta} \rangle x$ 
        \State \Return $x' = \frac{x + \delta}{\|x + \delta\|}$
        \end{algorithmic}
\end{algorithm}
\begin{algorithm}
\caption{Sub-Cauchy Projection Sampler (SCS) transition (on unit sphere)} \label{alg:rwm}
\begin{algorithmic}[1]
\Require Current state $x\in\cS^{d}$, tuning parameters $\theta = (o, R, \mu)$, step-size $h > 0$. 
\State Propose $x' \in\cS^{d}$ with \cref{alg:rwm proposal}. 
\State If $\ell_{x'} > \ell_o - 1$ ($x'$ is in dark side), set $x^\star$ as in \cref{eq: stepping-out}, otherwise set $x^\star = x'$. 
\State Compute $\alpha (y,y^\star)$ as in \cref{eq: alpha},  where $y = \mathrm{SCP}_\theta(x + e_{d+1})$, $y^\star = \mathrm{SCP}_\theta(x^\star + e_{d+1})$
\State With probability $\alpha(y,y^\star)$, set $x^{\mathrm{new}} = x^\star$; otherwise set $x^{\mathrm{new}} = x$.
\State \Return next state $x^{\mathrm{new}}$.
\end{algorithmic}
\end{algorithm}

\begin{figure}[ht!]
    \centering
    \resizebox{0.3\textwidth}{!}{
    \begin{tikzpicture}[
  line width=1.2pt,
  draw=red!70!black,
  every node/.style={red!70!black
  },
]


\draw (0,0) circle (4.8);

\begin{scope}
  \clip (0,0) circle (2.7);
  \fill[
  pattern=north east lines, 
  pattern color=black!40!white] (-3,-3) rectangle (3,3);
\end{scope}

\draw (-2.5,2.0) circle (0.38);
\node at (-2.5,2.0) {$x$};


\draw[-{Latex[length=3mm]}]
  (-2.05, 1.9) .. controls (-1.0,1.9) and (0.4,2.3) .. (1.05,2.85);

\draw (1.40,3.1) circle (0.38);
\node at (1.40,3.1) {$x^{\star}$};


\node[rotate=07] at (-0.95,2.0) {$|$};
\node at (-0.90, 1.6) {$x'$};
\node[rotate=16] at (0.25,2.33) {$|$};

\draw[decorate, decoration={brace, amplitude=6pt, mirror, raise=3pt},]
  (-0.6,1.6) -- (1.4, 2.5)
  node[midway, below=13pt, rotate=25] {$k=2$};
\end{tikzpicture}
            }
    \caption{Top view of the stepping-out mechanism. The shaded area represents the dark side. Starting from $x$, the point $x'$ was proposed in the dark side and the stepping-out function bring it back to the bright-side $x^\star$. \label{fig: stepping out}}
\end{figure}
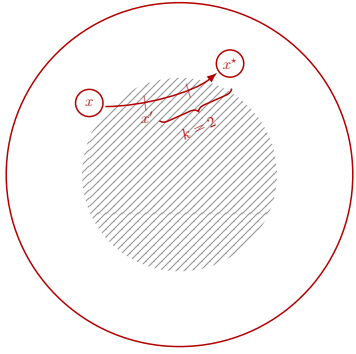

\subsection{Uniform Ergodicity}\label{sec: theory}
Our main result shows that the convergence of the SCS chain to a sub-Cauchy target $\pi$ is exponential, and its speed does not depend on the starting value. Formally, a Markov chain is \emph{uniformly ergodic} if, for all $y \in \RR^d$, 
$$
\|\cP^n(y, \cdot) - \pi\| \le C \rho^n, 
$$
for some constants $C>0$, $0<\rho < 1$, where $\|\cdot\|$ denotes total variation distance and $\cP^n(y, \cdot)$ denotes the law of the Markov chain initialized at $y$ after $n$ steps. Uniform ergodicity is a strong  (yet qualitative) property that is rarely satisfied for Metropolis--Hastings chains in $\RR^d$. 

\begin{theorem}\label{thm_uniform_ergodicity}
    If $\pi(y)$ is positive and continuous on $\mathbb{R}^d$, then SCS with any fixed $\ell_o\in [1,2)$ is uniformly ergodic if and only if 
    \[
    \sup_{y\in\mathbb{R}^d}\pi(y)J_\theta(y)<\infty,
    \]
    where $J_\theta(y)$ is given in \cref{eq: jacobian}.
\end{theorem}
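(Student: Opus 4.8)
The plan is to push the chain from $\RR^d$ onto the bright side $\cC_o^{d}(e_{d+1})$, on which SCS is literally a random-walk Metropolis sampler, and to argue there. Since $\mathrm{SCP}_\theta$ is a diffeomorphism between $\cC_o^{d}(e_{d+1})$ and $\RR^d$ (\cref{prop:bijection_and_jacobian}) and total variation distance is invariant under measurable bijections, the $\RR^d$-chain is uniformly ergodic if and only if the sphere-chain is. The change of variables turns the target into the surface density $\pi_{\cS}(x)\propto \pi(\mathrm{SCP}_\theta(x))\,J_\theta(\mathrm{SCP}_\theta(x))$ on $\cC_o^{d}(e_{d+1})$, and — because the projected-Gaussian proposal is symmetric and the stepping-out map \cref{eq: stepping-out} is symmetric with respect to it — the acceptance ratio \cref{eq: alpha} is exactly the Metropolis ratio $\pi_{\cS}(x^\star)/\pi_{\cS}(x)\wedge 1$ for target $\pi_{\cS}$. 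As $\mathrm{SCP}_\theta$ maps the bright side bijectively onto $\RR^d$, the hypothesis $\sup_{y}\pi(y)J_\theta(y)<\infty$ is equivalent to $\sup_x\pi_{\cS}(x)<\infty$, so the whole claim reduces to: the sphere-chain is uniformly ergodic if and only if $\pi_{\cS}$ is bounded.

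For the sufficiency direction I would show that the entire bright side is a small set, which yields uniform ergodicity by the standard small-set criterion. Writing $M:=\sup_x\pi_{\cS}(x)<\infty$, every acceptance probability is bounded below by $\min(1,\pi_{\cS}(x^\star)/\pi_{\cS}(x))\ge \pi_{\cS}(x^\star)/M$, using only $\pi_{\cS}(x)\le M$. Combined with a uniform lower bound $q(x,x')\ge q_{\min}>0$ on the proposal density between nearby points on the compact closed cap — exactly the kind of estimate established for the SPS proposal in \citep{yang2022stereographic}, which this proposal reuses — and keeping only proposals landing directly in a compact core $\cK\subset\cC_o^{d}(e_{d+1})$, one obtains for $A\subseteq\cK$,
\[
\cP(x,A)\ \ge\ \int_A q(x,x')\,\frac{\pi_{\cS}(x')}{M}\,\dd x'\ \ge\ \frac{q_{\min}}{M}\int_A \pi_{\cS}(x')\,\dd x'\ =\ \epsilon\,\nu(A),
\]
with $\nu$ the normalized restriction of $\pi_{\cS}$ to $\cK$ and $\epsilon>0$ independent of $x$. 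Because the projected-Gaussian proposal moves by less than a right angle in a single step, while the cap can have diameter approaching $\pi$ as $\ell_o\uparrow 2$, I would in general iterate this bound a fixed number $m$ of steps (exploiting the stepping-out reach across the dark side) so that $\cK$ is attainable from every $x$, giving the minorization $\cP^m(x,\cdot)\ge\epsilon\,\nu(\cdot)$ on the whole cap.

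For necessity I would show that unbounded $\pi_{\cS}$ forces arbitrarily slow, starting-point-dependent convergence. Since $\pi$ is positive and continuous and $J_\theta$ is finite and positive on the interior, $\pi_{\cS}$ is locally bounded, so $\sup_x\pi_{\cS}(x)=\infty$ forces a sequence $x_n$ approaching the cap boundary with $\pi_{\cS}(x_n)\to\infty$. The total one-step move probability from $x_n$ satisfies
\[
a(x_n)=\mathbb{E}\!\left[\min\!\Big(1,\tfrac{\pi_{\cS}(x^\star)}{\pi_{\cS}(x_n)}\Big)\right]\le \frac{1}{\pi_{\cS}(x_n)}\int \tilde q(x_n,x^\star)\,\pi_{\cS}(x^\star)\,\dd x^\star\ \le\ \frac{C}{\pi_{\cS}(x_n)}\ \longrightarrow\ 0,
\]
where $\tilde q$ is the effective proposal density after stepping-out, assumed uniformly bounded, and $C$ is independent of $n$ because $\pi_{\cS}$ is integrable. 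Then for every fixed $m$, $\cP^m(x_n,\{x_n\})\ge (1-a(x_n))^m\to 1$, while $\pi_{\cS}(\{x_n\})=0$, so $\sup_x\|\cP^m(x,\cdot)-\pi\|=1$ for all $m$; this rules out any bound $C\rho^m<1$ and hence uniform ergodicity.

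The main obstacle is controlling the proposal near the cap boundary. For sufficiency I need the per-step lower bound $q_{\min}$ and, when the stepping-out reach is invoked, a lower bound on the stepped-out part of the kernel; for necessity I need a uniform upper bound on the effective proposal density $\tilde q$. Both come down to continuity and positivity of the projected-Gaussian density on the compact cap together with control of the Jacobian of the explicit rotation in \cref{eq: stepping-out}; the direct-proposal estimates can be imported from the SPS analysis \citep{yang2022stereographic}, whereas the estimates for the stepping-out map are the genuinely new technical component and the place I expect the real work to lie.
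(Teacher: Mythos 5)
Your reduction to the sphere and your sufficiency argument are essentially the paper's proof: the paper shows the whole (compactified) bright side is a small set via an explicit $3$-step minorisation $P^3(z,A)\ge\epsilon\,\pi_S(A)$, built from exactly your three ingredients --- the bound $\pi_S\le M$ from the hypothesis, a positive lower bound for $\pi_S$ on the compact set obtained by removing a small collar $\DS(\epsilon)$ around the dark side, and a lower bound for the proposal density on a slightly shrunken hemisphere --- together with a continuity argument showing that the measure of admissible two-intermediate-point paths is bounded below uniformly in the endpoints. Two details worth noting: the paper fixes the number of steps at $3$ (each proposal covers strictly less than a hemisphere, while the bright side contains one, so two intermediate stops suffice and leave room to avoid the boundary), and its minorisation never uses the stepping-out map --- for a lower bound one may simply discard proposals landing on the dark side, so the ``stepping-out reach'' you invoke is unnecessary and would only complicate the estimates.

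The genuine gap is in your necessity direction, which is also where you depart from the paper. The paper disposes of necessity in one line by citing Proposition 5.1 of Roberts and Tweedie (1996): a Metropolis chain whose rejection probability has essential supremum $1$ is not even geometrically ergodic. Your direct sticky-state argument is sound in outline, but it rests on the assumption that the effective proposal density $\tilde q$ (after stepping-out) is uniformly bounded, and this is false. Along a fixed great circle the stepping-out map sends the polar angle $\alpha$ to $K\alpha$, so its pushforward density carries the Jacobian factor $\sin^{d-1}(\alpha)/\bigl(K\sin^{d-1}(K\alpha)\bigr)$, which diverges as $K\alpha\to\pi$; since $K\alpha$ ranges over $(\phi+\gamma,\,\phi+\gamma+\alpha]$ and $\phi+\gamma$ can be arbitrarily close to $\pi$ (great circles passing near the pole, e.g.\ already for $\ell_o=1.1$), stepped-out proposals can land arbitrarily close to the antipode $-x_n$, where $\tilde q$ blows up. What your bound on $a(x_n)$ actually needs is the weaker integral estimate $\int\tilde q(x_n,x^\star)\,\pi_S(x^\star)\,\nu(\dd x^\star)\le C$, but this is also not automatic: stepped-out points land by construction just outside the dark side, i.e.\ near the cap boundary, which is exactly where $\pi_S$ may be unbounded. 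For $\ell_o>1$ one can likely repair the argument by splitting the bright side into a neighbourhood of the antipode (where $\pi_S$ is locally bounded and the singularity of $\tilde q$ is integrable) and its complement (where $\tilde q$ is bounded because the cap boundary stays at angle bounded away from $\pi$ from $x_n$); but for $\ell_o=1$, which the theorem allows, the antipodal region and the cap boundary merge at the equator and this splitting itself fails, so more care is needed. The clean way out is to quote the Roberts--Tweedie criterion, as the paper does; if you insist on a self-contained proof, the control of the stepping-out pushforward near angle $\pi$ is the missing work, and it is not merely technical bookkeeping.
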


\cref{thm_uniform_ergodicity} implies that the proposed SCS is uniformly ergodic for a large class of heavy/light-tailed distributions, including distributions with tails no heavier than a (multivariate) Cauchy distribution; see \cref{cor: unif ergodicity}. To the best of our knowledge, SCS is the first general-purpose Metropolis-adjusted MCMC with such provable convergence guarantees.

\begin{corollary}[Uniform ergodicity of SCS for sub-Cauchy distributions] \label{cor: unif ergodicity}
      If $\pi(y)$ is sub-Cauchy, positive and continuous in $\mathbb{R}^d$, 
    then SCS with any fixed $\ell_o\in [1,2)$ is uniformly ergodic.
\end{corollary}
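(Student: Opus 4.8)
The plan is to deduce \cref{cor: unif ergodicity} directly from \cref{thm_uniform_ergodicity}. Since $\pi$ is assumed positive and continuous, the hypotheses of the theorem are met, and it remains only to verify the condition $\sup_{y}\pi(y)J_\theta(y)<\infty$. By \cref{def: sub-cauchy} there is a constant $K<\infty$ with $\pi(y)\le K(1+\|y\|^2)^{-(d+1)/2}$ for all $y$, so it suffices to establish a matching upper bound $J_\theta(y)\le C(1+\|y\|^2)^{(d+1)/2}$; equivalently, since $J_\theta$ is continuous and hence bounded on compact sets, it suffices to show $J_\theta(y)=O(\|y\|^{d+1})$ as $\|y\|\to\infty$. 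The whole corollary thus reduces to a growth estimate on $J_\theta$.

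To obtain it I would first put \cref{eq: jacobian} into a more transparent form. Writing $a:=\langle\hat y-h_o,h_o\rangle-\ell_o(\ell_o-1)$, $b:=\|\hat y-h_o\|^2+\ell_o^2$, and $c:=\|h_o\|^2+\ell_o^2-2\ell_o$, formula \cref{eq: M} reads $M=(-a+\sqrt{a^2-bc})/b$, so that $Mb=-a+\sqrt{a^2-bc}$. A one-line computation then shows that the numerator of \cref{eq: jacobian} equals $Mb+a=\sqrt{a^2-bc}$, whence
\[
J_\theta(y)=\frac{R^d}{\ell_o}\,\frac{\sqrt{a^2-bc}}{M^{d}}=\frac{R^d}{\ell_o}\,\frac{b^{d}\sqrt{a^2-bc}}{\bigl(-a+\sqrt{a^2-bc}\bigr)^{d}}.
\]
This is the identity I would work with. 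Note in passing that $c<0$ whenever the observer is strictly inside the sphere, so the square root is always real and $M>0$.

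Next I would examine the leading order as $\|y\|\to\infty$. Writing $\hat y=r\omega$ with $r=\|\hat y\|=\|y-\mu\|/R\to\infty$ and $\omega$ a unit vector, one has $a\sim r\langle\omega,h_o\rangle$, $b\sim r^2$, and $c$ constant, so $a^2-bc\sim r^2\bigl(\langle\omega,h_o\rangle^2-c\bigr)$. The crucial point is that $-c=\ell_o(2-\ell_o)-\|h_o\|^2$ is strictly positive: the observer $o\in\cB^{d+1}(e_{d+1})$ satisfies $\|h_o\|^2+(\ell_o-1)^2<1$, i.e.\ $\|h_o\|^2<\ell_o(2-\ell_o)$, and $\ell_o\in[1,2)$ guarantees $\ell_o(2-\ell_o)>0$. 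Hence $\langle\omega,h_o\rangle^2-c\ge -c>0$ uniformly in $\omega$, so $\sqrt{a^2-bc}\asymp r$, while $-a+\sqrt{a^2-bc}=Mb\asymp r$ because the square root strictly dominates $\langle\omega,h_o\rangle$. Consequently $M\asymp 1/r$, and the displayed identity yields $J_\theta(y)\asymp r^{d+1}\asymp\|y\|^{d+1}$, with both implied constants uniform over the direction $\omega$. Combining the resulting bound $J_\theta(y)\le C(1+\|y\|^2)^{(d+1)/2}$ with the sub-Cauchy bound on $\pi$ gives $\sup_y\pi(y)J_\theta(y)\le KC<\infty$, and \cref{thm_uniform_ergodicity} then delivers uniform ergodicity.

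I expect the main obstacle to be precisely making the asymptotics uniform in the direction $\omega$, and this is exactly where $-c>0$ does the essential work, bounding $\langle\omega,h_o\rangle^2-c$ away from zero for every $\omega$ simultaneously and keeping $-a+\sqrt{a^2-bc}$ above a positive multiple of $r$. The restriction $\ell_o\in[1,2)$ is what enables this: at $\ell_o=2$ (the classical stereographic case, necessarily with $h_o=0$) one has $c=0$, the cancellation leaves $\sqrt{a^2-bc}$ constant rather than $\Theta(r)$, so $M\asymp 1/r^2$ and $J_\theta(y)\asymp\|y\|^{2d}$, consistent with \cref{example 2}; this controls only Student-$t$ rather than Cauchy tails. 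Once the uniformity in $\omega$ is secured, the remaining steps are routine.
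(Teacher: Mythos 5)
Your proposal is correct and takes essentially the same route as the paper's proof: invoke \cref{thm_uniform_ergodicity}, factor the supremum as $\sup_y \pi(y)J_\theta(y)\le \bigl[\sup_y \pi(y)(1+\|y\|^2)^{(d+1)/2}\bigr]\cdot\bigl[\sup_y (1+\|y\|^2)^{-(d+1)/2}J_\theta(y)\bigr]$, and then check that $J_\theta(y)=O(\|y\|^{d+1})$ as $\|y\|\to\infty$ using $\ell_o<2$. The only difference is one of detail: the paper leaves the final growth estimate as a substitution-and-limit remark, whereas you carry it out explicitly via the clean identity $J_\theta(y)=(R^d/\ell_o)\sqrt{a^2-bc}/M^{d}$ and the observation that $-c=\ell_o(2-\ell_o)-\|h_o\|^2>0$ makes the asymptotics uniform in direction --- a correct and useful elaboration of the same argument rather than a different one.
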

\begin{proof}
    By \cref{thm_uniform_ergodicity}, it suffices to show $ \sup_{y\in\mathbb{R}^d}\pi(y)J_\theta(y)<\infty$.
  Since
    \[
 \sup_{y\in\mathbb{R}^d}\pi(y)J_\theta(y)\le \left[\sup_{y\in \mathbb{R}^d}\pi(y)(1+\|y\|^2)^{(d+1)/2}\right]\cdot \left[\sup_{y\in \mathbb{R}^d}(1+\|y\|^2)^{-(d+1)/2}J_\theta(y)\right],
    \]
    and the first term is finite by Definition~\ref{def: sub-cauchy},
    it suffices to show
    \[
     \sup_{y\in \mathbb{R}^d}(1+\|y\|^2)^{-(d+1)/2}J_\theta(y)<\infty.
    \]
      As $R, \mu$, and $\ell_o<2$ are constants, this is verified by substituting $J_\theta(y)$ in \cref{eq: jacobian}, using the continuity of $\pi$, and taking the limit as $\|y\|\to\infty$.
\end{proof}
\begin{remark} \label{rmk: limitation SPS}
     \cref{cor: unif ergodicity} offers a substantial improvement to the uniform ergodicity result of SPS in \citet[Theorem 2.1]{yang2022stereographic}, which holds only for $d$-dimensional targets with lighter tails than a student's $t$ with degree of freedom $\nu \ge d$. In particular, SPS is not geometrically ergodic for student's $t$ with $\nu<d$ \cite[Proof of Theorem 2.1]{yang2022stereographic}\cite[Sec.~3.1.3]{brevsar2025central}; see also the numerical comparison in \cref{sec: dark side of the moon}.
\end{remark}

\section{Tuning parameters}
\label{sec: tuning parameters}

In this section, we provide a principled approach to select the tuning parameters of SCP before running the MCMC algorithm. For a given latitude of the observer $\ell_o \in [1,2)$, the goal is to choose the parameter $\bar \theta = (h_o, \mu, R)$ such that the target $\pi$ transformed to the bright side $\cC_o^{d}(e_{d+1})$ is close to the uniform distribution. This is because when the density on the bright side is uniform, \cref{alg:rwm} has acceptance probability equal to 1 for any step-size $h>0$, leading to fast mixing and small auto-correlation. 

Throughout, the latitude $\ell_o$ is fixed (see Appendix~\ref{app: latitude} for guidance motivated by simple geometric arguments), while the step-size $h>0$ of SCS is tuned in order to achieve a certain average empirical acceptance probability \citep{gelman1996efficient, roberts1997weak}, leaving the total number of iterations as the only user-specified parameter. 
\begin{remark}
    Popular adaptive schemes for tuning MCMC hyperparameters \citep[e.g.][]{gilks1998adaptive, bell2024adaptive} are based on estimating and matching the first two moments of the target distribution. This strategy may fail for heavy-tailed distributions, as these distributions do not necessarily have finite moments. 
\end{remark}

For a given $\ell_o \in [1,2)$, let $\bar \theta := (h_o, \mu, R) \in \Theta$ where $\Theta = \{h_o\in\RR^d: \|h_o\|_2^2 + (\ell_o-1)^2\leq 1 \} \times\RR^d \times \RR^{+}$. Let $q_{\bar \theta}$ denote the distribution of $y=\mathrm{SCP}_\theta(x)$, where $x\sim \mathrm{Unif}(\cC_o^{d}(e_{d+1}))$ and $\theta = (\ell_o, \bar \theta)$. 
We optimize $\bar \theta$ by minimizing the Kullback-Leibler (KL) divergence between $q_{\bar \theta}$ and $\pi$, that is,
$\min_{\bar \theta\in\Theta}\; \mathrm{KL}(q_{\bar \theta} \| \pi ).$ 
This approach falls within the framework of variational inference \citep{blei2017variational}, where one seeks to approximate the target $\pi$ by a distribution from the variational family $\{q_{\bar\theta}: \bar\theta\in\Theta \}$. 

The objective function can be expressed as
\begin{equation}\label{equ: kl}
\begin{split}
    \mathrm{KL}(q_{\bar \theta} \| \pi ) &= \mathbb{E}_{q_{\bar \theta}}[ \log q_{\bar \theta}(y) - \log \pi(y) ]\\
    &=\mathbb{E}_{\mathrm{Unif}(\cC_o^{d}(e_{d+1})) }[\log q_{\bar \theta}(\mathrm{SCP}_\theta(x) ) - \log \pi(\mathrm{SCP}_\theta(x) ) ]\\
    &=\mathbb{E}_{\mathrm{Unif}(\cC_o^{d}(e_{d+1})) } [-\log J_\theta(x) - \log \pi(\mathrm{SCP}_\theta(x) ) ].
\end{split}
\end{equation}
The expectation in the objective function can be approximated using Monte Carlo samples drawn from $\mathrm{Unif}(\cC_o^{d}(e_{d+1}))$. The parameter $\bar\theta$ is then optimized via stochastic gradient descent, with gradients computed by automatic differentiation, following standard practice in black-box variational inference \citep{ranganath2014black}. 

If the target $\pi$ belongs to the variational family, then SCP with the optimal $\bar\theta$ exactly transforms $\pi$ to the uniform distribution on the bright side $\cC_o^{d}(e_{d+1})$. 
When $\pi$ does not lie in the variational family, the induced density on the bright side will not be exactly uniform. Nevertheless, the location parameter $\mu$ and scale parameter $R$ allow the variational approximation to capture the bulk of the target distribution, while the observer $o$ provides an additional degree of freedom to accommodate skewness. 
The scale parameter $R$ can also be generalized to allow for different scales for different dimensions, although we do not pursue this extension in the present paper. We provide an illustration of the performance of the variational approximation in \cref{sec:skew-cauchy}.

\section{Numerical simulations}
\label{sec: numerics}
\subsection{Multivariate skew $t$ distribution}\label{sec:skew-cauchy}
We consider here the multivariate skew $t$ distribution introduced by \citep{branco2001general,azzalini2003distributions}. Following \cite{azzalini2003distributions}, a $d$-dimensional skew $t$ distribution with location $\xi\in\RR^d$, scale matrix $\Omega\in\RR^{d\times d}$, skewness $\alpha\in\RR^d$, and degree of freedom $\nu$ can be represented as the distribution of $\xi+V^{-1/2} Z$, where $V\sim \chi^2_\nu / \nu$, and $Z$ follows a skew normal distribution with density
\begin{align*}
    2\, \varphi_d(z;0, \Omega) \,\Phi(\alpha^\top \omega^{-1} z ), 
\end{align*}
with $\omega=\mathrm{diag}(\Omega_{11}^{1/2},\ldots,\Omega_{dd}^{1/2} )$ and where $\Phi$ denotes the standard Normal cumulative distribution function and $\varphi_d(y ; \, \mu, \Sigma)$ denotes the $d$-dimensional Gaussian density with mean $\mu$ and variance $\Sigma$ evaluated at $y$. 
We fix $d=100$, $\xi=\mathbf{0}$, $\Omega=I_d$, and $\alpha=(100,-100,0,\ldots,0)$. We compare SCS (with latitude $\ell_o=1.1$ and step-size $h=0.1$ on the sphere) against the standard HMC sampler, which takes 10 leapfrog steps of size 0.1 in each iteration. Both samplers are initialized at $(1,\ldots,1)$, run for 500,000 iterations, discard the first 100 samples, and retain every 50th sample for evaluation. 
The average acceptance rates are about 0.4 for SCS and 0.8 for HMC. 
We compute empirical marginal quantiles at probabilities \sloppy{$0.01,0.02,0.05,0.1, 0.2, 0.5, 0.8, 0.9, 0.95, 0.98, 0.99$}, and compare them against the corresponding exact quantiles, estimated from $10^7$ independent samples directly drawn from the target distribution.

\cref{fig: skewt qq} displays Q--Q plots for the first four coordinates, with variability across 10 independent replicates indicated by the shaded regions.
When $\nu=2$ (top panels), both samplers capture the bulk of the distribution, though HMC is less accurate at the extreme tails. When $\nu=1$ (bottom panels), corresponding to the skew Cauchy distribution, HMC fails to explore the distribution, whereas SCS continues to perform well, with accurate quantiles up to the 1st and 99th percentiles.
Notably, SCS is gradient-free, while HMC requires 10 gradient evaluations per iteration, making SCS much more efficient.

\begin{figure}
    \centering
    \includegraphics[width=0.8\textwidth]{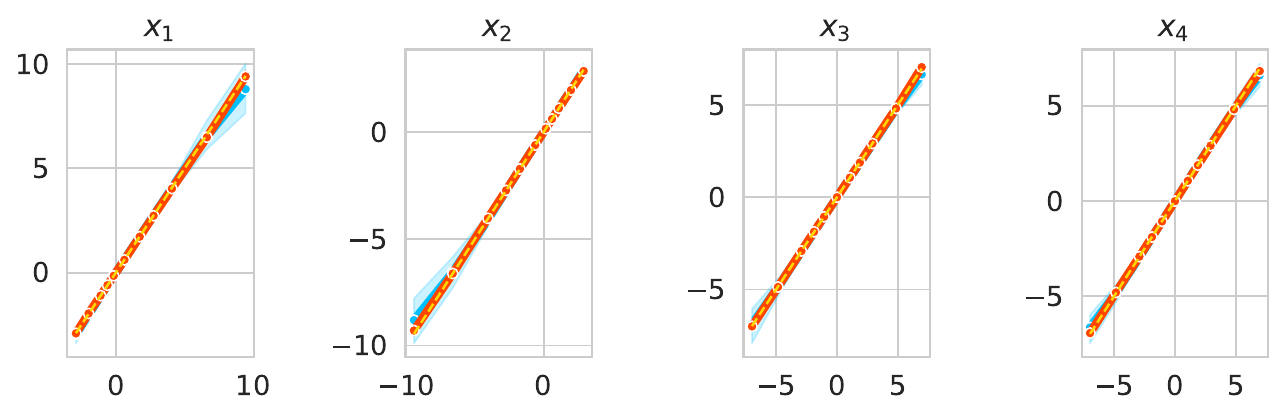}
    \includegraphics[width=0.8\textwidth]{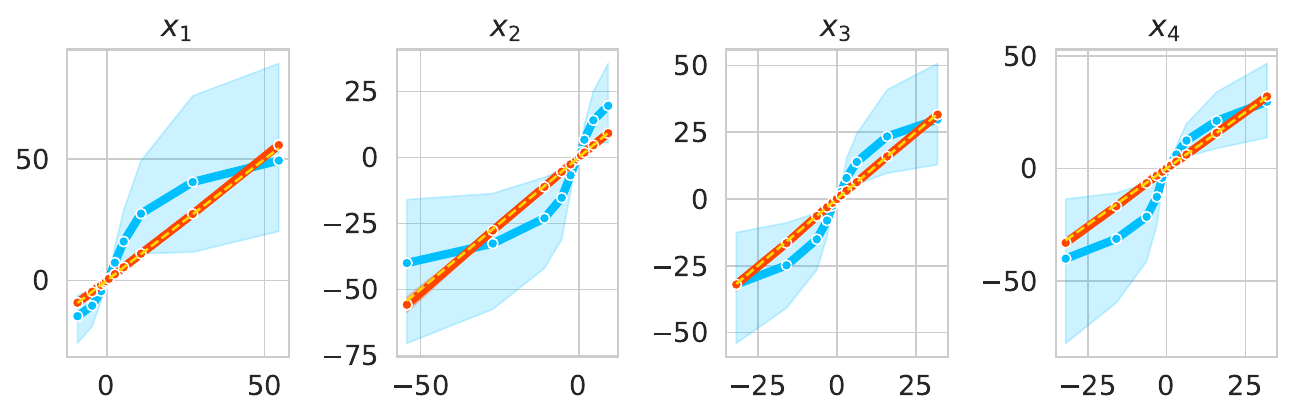}
    \caption{Q--Q plot of SCS (red) and HMC (blue) in sampling a 100-dimensional multivariate skew $t$ distribution with 2 (top) or 1 (bottom) degrees of freedom. Shaded regions show variation across 10 independent replicates.}
    \label{fig: skewt qq}
\end{figure}

To assess the performance of the variational method described in \cref{sec: tuning parameters}, we report the cosine between $h_o$ and $\alpha$, as well as the relative difference between $\mu$ and $\xi$. In the target distribution, $\alpha$ and $\xi$ respectively control skewness and location. Their counterparts in the variational family are $-h_o$ and $\mu$, which play analogous roles. Hence, we expect $h_o$ and $\alpha$ to become aligned and $\mu$ to get close to $\xi$ when using the variational method of \cref{sec: tuning parameters}. 

In this experiment, we fix $d=100,\nu=2$. We use the same $\alpha$ as before, and set the elements of $\xi$ to be equally spaced between $-10$ and $10$. Optimization of the KL divergence \cref{equ: kl} is carried out with a Monte Carlo sample size of 2000 and the Adam optimizer \citep{kingma2014adam} with a learning rate of 0.01.
As shown in \cref{fig: skewt_vi_convergence}, $\cos(h_o, \alpha)$ approaches $-1$ as the number of gradient descent steps increases, while $\|\mu-\xi\|_2/\|\xi\|_2$ approaches 0.

\begin{figure}[ht!]
    \centering
    \includegraphics[width=0.7\textwidth]{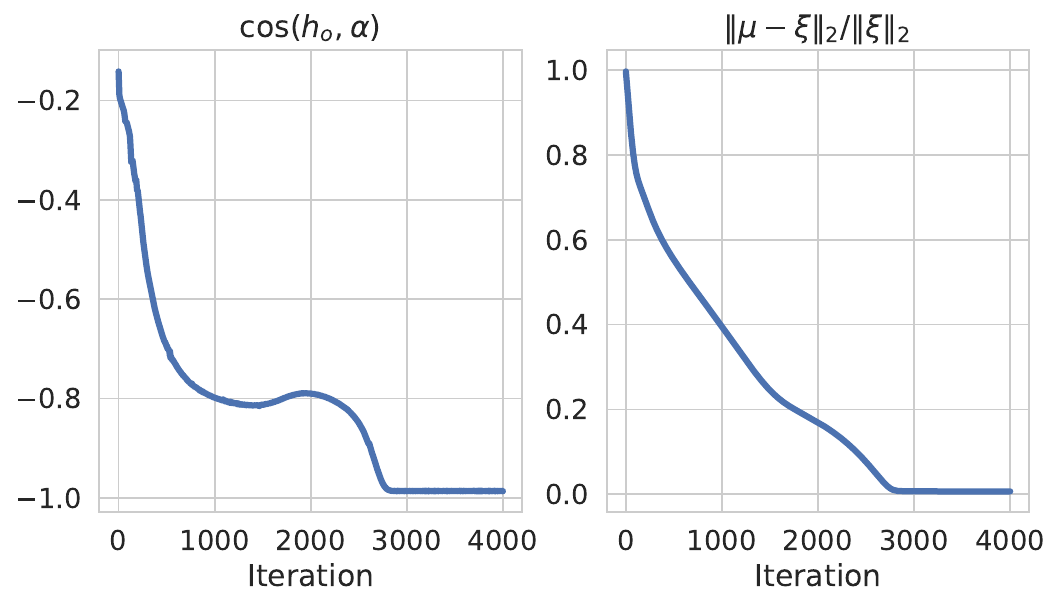}
    \caption{Cosine between $h_o$ and $\alpha$ (left) and relative squared distance between $\mu$ and $\xi$ (right) versus the number of gradient descent iterations when optimizing the KL divergence~\eqref{equ: kl}.}
    \label{fig: skewt_vi_convergence}
\end{figure}

To further examine the transient behavior, \cref{fig: skewt trace} shows the trace plots of a single chain from each sampler, both initialized from a cold start at $(500,\ldots,500)$, with $d=10$ and $\nu=2$. HMC fails to reach the typical set and continues to wander in the heavy-tailed region even after 100,000 iterations, whereas SCS rapidly converges to the typical set. This illustrates the substantial advantage of SCS over HMC when the chain is initialized far in the tail.

\begin{figure}[ht!]
    \centering
    \includegraphics[width=0.7\textwidth]{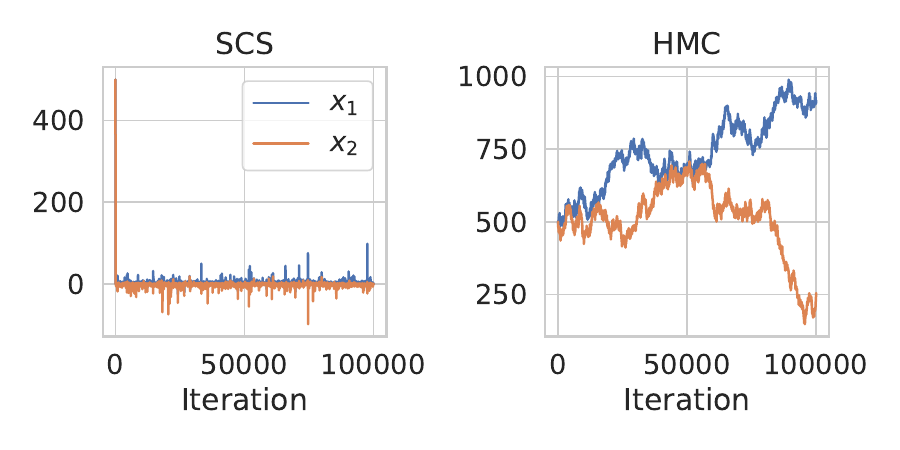}
    \caption{Trace plots of the first two coordinates of SCS (left) and HMC (right) in sampling a 10-dimensional multivariate skew $t$ distribution with 2 degrees of freedom.}
    \label{fig: skewt trace}
\end{figure}

\subsection{Robust Bayesian binary regression}\label{sec:robit}
We consider here  Bayesian binary regression problems, where the binary response $y_i\in\{0,1\}$ is modeled as 
\begin{align*}
    \mathbb{P}(y_i=1\mid x_i,\beta) = F(x_i^\top \beta),\quad 1\leq i\leq n,
\end{align*}
with covariates $x_i\in\RR^d$ and regression coefficients $\beta\in\RR^d$. This formulation reduces to the logistic regression model when $F(x)=\frac{1}{1+e^{-x}} $, and to the probit model when $F=\Phi$ is the CDF of the standard normal distribution. Both models are known to be sensitive to outliers, motivating the use of heavier-tailed link functions such as the CDF of a student's $t$ distribution \citep{albert1993bayesian,liu2004robit}, which is known as robit regression. 
When $\beta$ has a normal prior, geometric ergodicity has been established under suitable conditions for a Gibbs sampler with data augmentation (DA-Gibbs) \citep{roy2012convergence,mukherjee2023convergence}. 

\citet{gelman2008weakly} propose to use independent $t$ distributions as the default prior for the regression coefficients in logistic regression and other generalized linear models (e.g., multinomial logit and probit). Such priors encode weak prior information and, importantly, stabilize the estimates when the responses are linearly separable. More recently, \citet{ghosh2018use} establish necessary and sufficient conditions for the existence of posterior means under Cauchy priors, and develop a DA-Gibbs for logistic regressions with Cauchy priors. 

However, \citet{ghosh2018use} report extremely slow mixing of DA-Gibbs with heavy-tailed priors in the presence of separation. While they acknowledge that alternative samplers such as the No-U-Turn Sampler (NUTS; \citep{hoffman2014no}) can improve mixing, they note that ``the problematic issue of posteriors with extremely heavy tails under Cauchy priors cannot be resolved without altering the prior.'' In Appendix~\ref{sec: da gibbs}, we describe DA-Gibbs under data separation and provide an heuristic argument showing slow-mixing of the chain. In particular, we show that, as $n\to\infty$, the posterior standard deviation of $\beta$ is of order $O(\sqrt{n\log n})$, while the DA-Gibbs step-size is of order $O(\sqrt{\log n})$ near the mode. As a result, the step-size is much smaller than the bulk of the posterior distribution, causing high autocorrelation and slow mixing.

We demonstrate the performance of the proposed SCS on the Bayesian logistic regression when there is separation in the data. We set $d=20$ and $n=50$. The covariates $x_i$ are generated i.i.d. from the standard normal, and responses are set as $y_i=\mathbf{1}\{x_{i,1}> 0 \} $, making the data separable along the first covariate. 
We consider both logit link and robit link (student's $t$ with 2 degrees of freedom) in the binary regression. We use the default prior proposed in \cite{gelman2008weakly}, where we first normalize the covariates to have mean 0 and standard deviation 0.5, and then place independent student $t$ priors with scale 2.5. We compare three samplers: SCS (with $\ell_o = 1.1$), {Gibbs} sampler with data augmentation, and vanilla {HMC}. For HMC, every iteration runs $10$ leapfrog steps, with the step-size tuned to target an average acceptance rate of $0.8$. 
All methods are repeated 20 times independently. Each chain is run for 100 iterations as warm-up. SCS and Gibbs are run for 5,000,000 iterations, while HMC is run for 1,000,000 iterations, keeping the wall-clock time roughly on the same order. After thinning, each chain retain 10,000 samples to calculate the quantiles.  Reference quantiles are obtained from NUTS using 20 parallel chains, each with 5,000,000 iterations.

We report the Q--Q plots for $\beta_1$ sampled by the three algorithms for the logistic regression (\cref{fig: logistic}) and robit regression (\cref{fig: robit}). The shaded bands indicate variability across 20 independent chains of each method. Consistent with the observation in \cite{ghosh2018use}, Gibbs performs poorly when the prior is heavy-tailed, but the performance improves as the degree of freedom increases. HMC is more accurate when $\nu=4$ or 6, but struggles to sample in the tail when $\nu=2$. The proposed SCS method performs more reliably even in the extremely heavy-tailed situations.

\begin{figure}
    \centering
    \includegraphics[width=.8\textwidth]{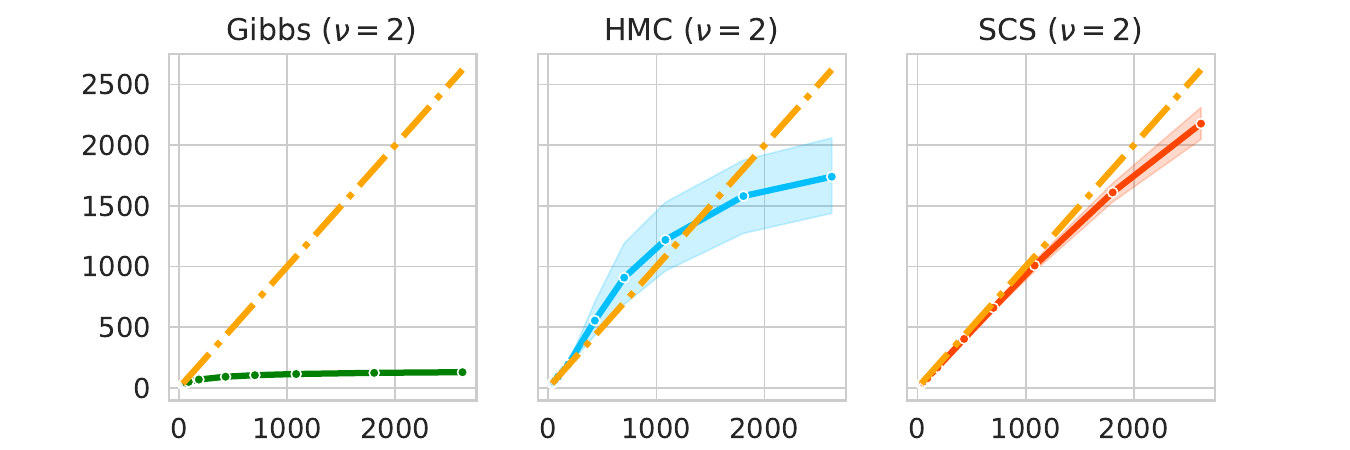}
    \includegraphics[width=.8\textwidth]{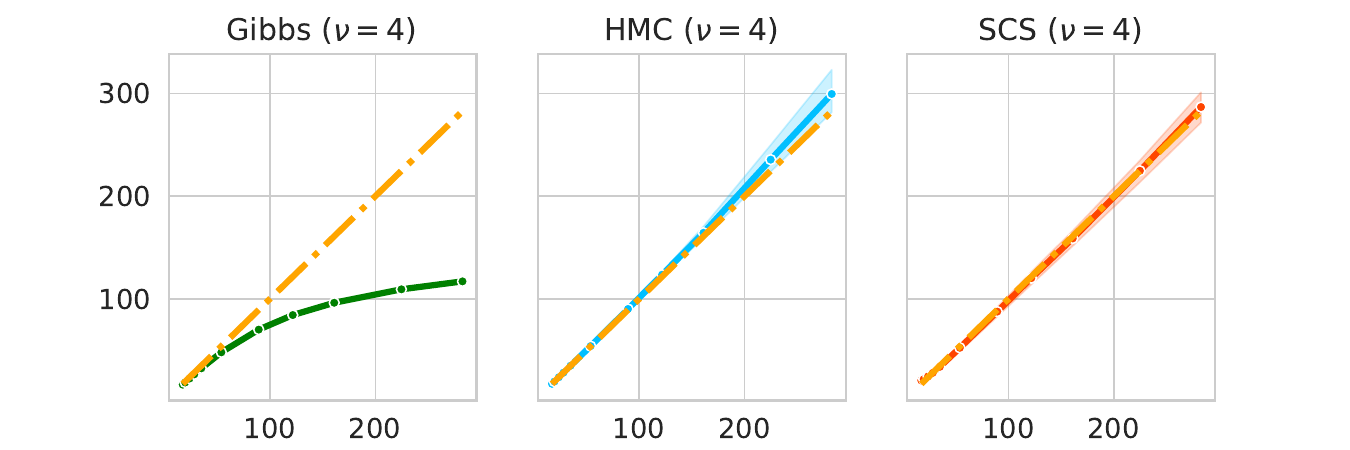}
    \includegraphics[width=.8\textwidth]{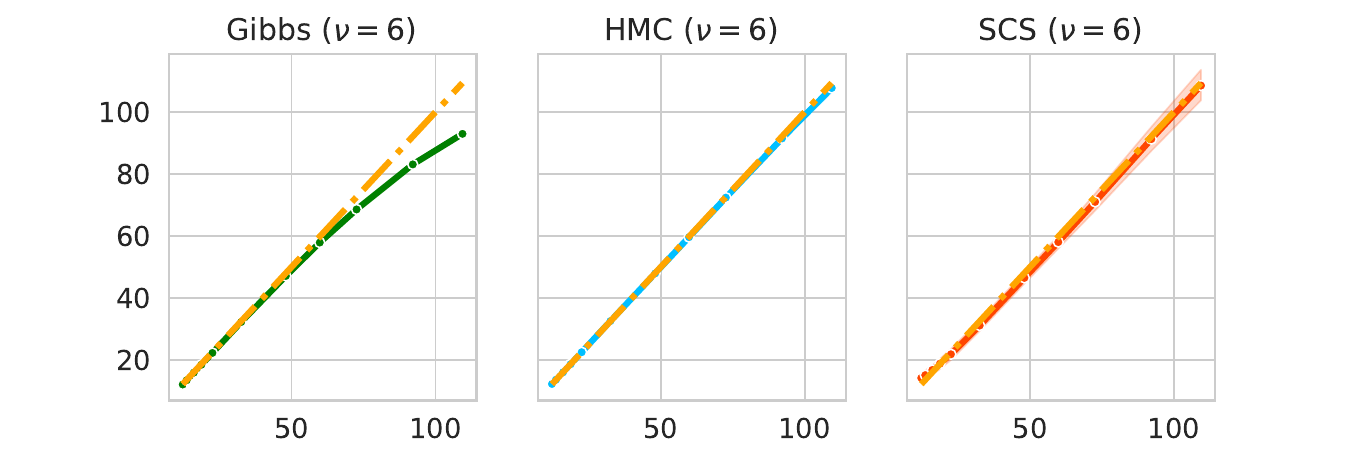}
    \caption{Q--Q plot of $\beta_1$ in logistic regression derived with the Gibbs sampler (left columns), HMC (middle column) and SCS (right column). The prior is the $t$ distribution with degree of freedom equal to 2 (top), 4 (middle), or 6 (bottom). }
    \label{fig: logistic}
\end{figure}

\begin{figure}
    \centering
        \includegraphics[width=0.8\textwidth]{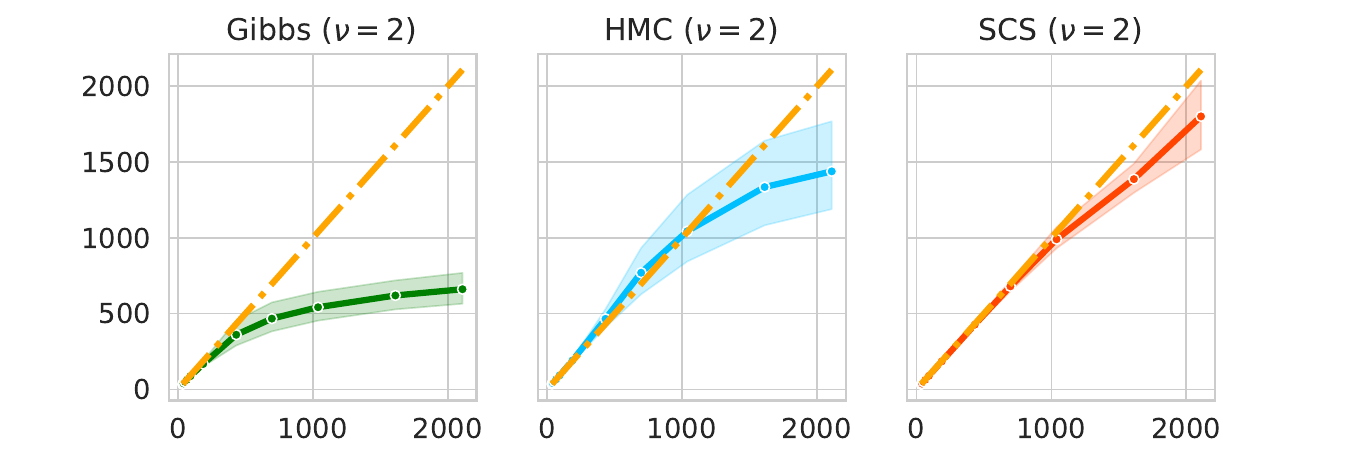}
        \includegraphics[width=0.8\textwidth]{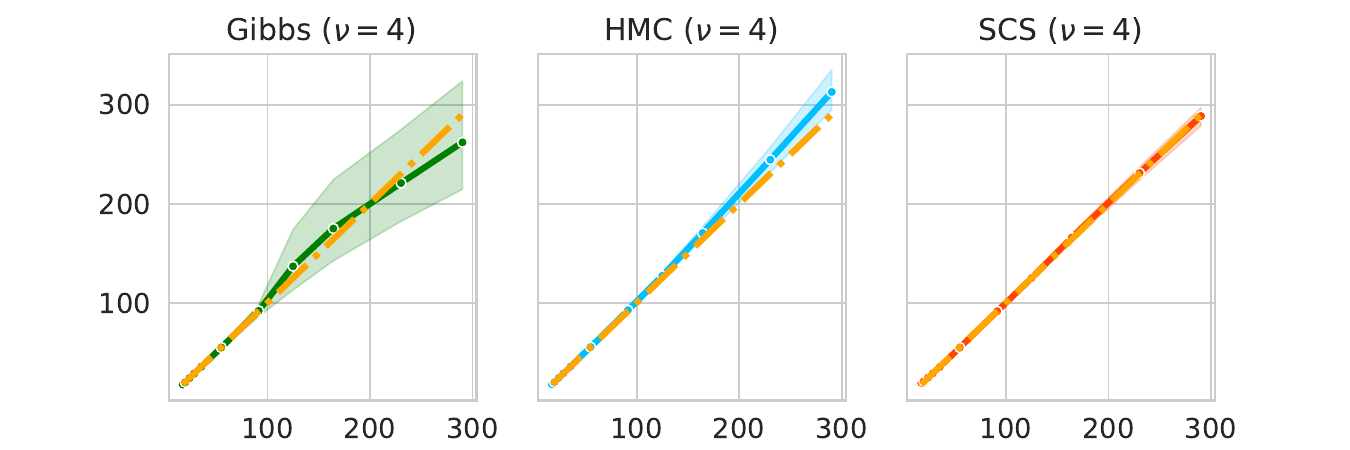}
        \includegraphics[width=0.8\textwidth]{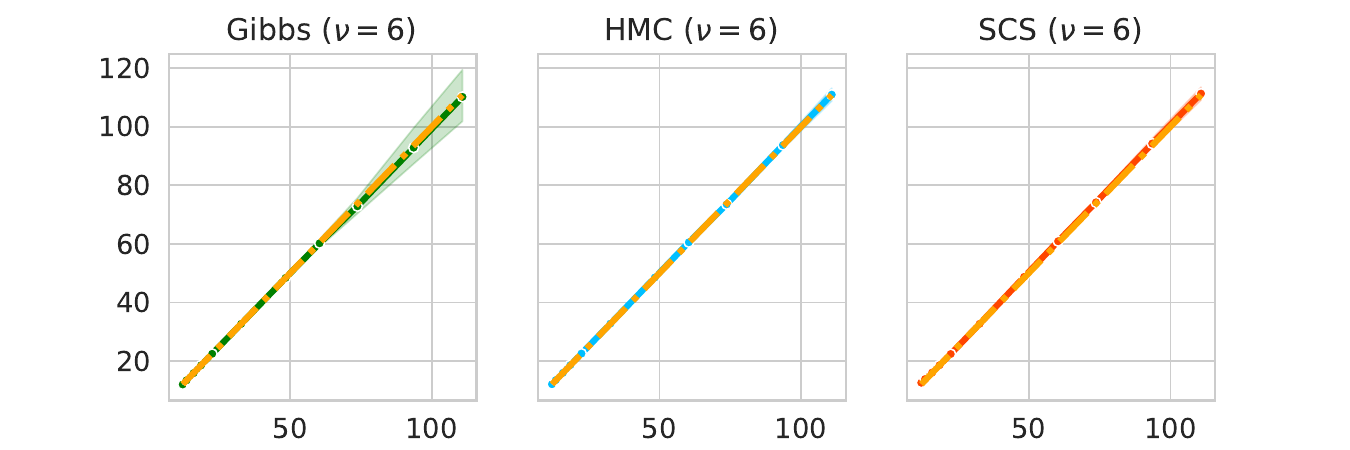}
    \caption{Q--Q plot of $\beta_1$ in the robit regression. Caption as in \cref{fig: logistic}. }
    \label{fig: robit}
\end{figure}

\section{Discussion}
We introduce a new Metropolis--Hastings algorithm, Sub-Cauchy Projection Sampler, that is uniformly ergodic for sub-Cauchy target distributions. The tuning parameters of the projection are automatically  set via the variational method, making the proposed SCS broadly applicable whenever the posterior is suspected to be heavy-tailed. Our numerical experiments show that SCS outperforms several popular methods in these scenarios. We conclude the paper by outlining several open problems and directions for future research.
\begin{enumerate}
    \item \emph{Optimal scaling and adaptive MCMC}: we proposed a variational approach for selecting tuning parameters that works well across our simulations. However, the optimality of this tuning rule remains an open question. A popular framework to adaptively tune hyperparameters in Metropolis--Hastings algorithms is given by 
    the optimal scaling theory \citep{roberts1997weak}. However, in our framework with heavy-tailed targets and geometrically transformed MCMC algorithms, the derivation of scaling limits  is challenging \citep{yang2020optimal,yang2022stereographic}. We therefore defer a rigorous treatment of this issue to future work.

    \item \emph{Quantitative convergence results}: Quantitative bounds on the mixing time of MCMC algorithms and their dependence on the dimension have been an active area of theoretical research in recent years \citep{dwivedi2019log,chen2020fast,yang2023complexity,andrieu2024explicit,milinanni2026ongoing}. An important direction for future work is to establish rapid mixing guarantees for SCS over a broad class of sub-Cauchy target distributions.

    \item \emph{Gradient-based MCMC for sub-Cauchy distributions}: SCS is a random-walk–type algorithm that does not exploit gradient information of the target distribution. An important direction for future work is to integrate SCS with ideas from robust gradient-based MCMC methods \citep{livingstone2022barker,bell2026ongoing}, or to combine it with MCMC frameworks incorporating boundary conditions \citep{bierkens2023sticky, bierkens2023methodsapplicationspdmpsamplers}.
\end{enumerate}

\section*{Acknowledgement}
The authors would like to thank Nicolas Chopin for helpful discussions on Bayesian robust regression and Wenkai Xu for helpful discussions at the early stage of this work. SG acknowledges support from the European Union (ERC), through the Starting Grant ‘PrSc-HDBayLe’, project number 101076564. JY’s research is supported by the Independent Research Fund Denmark (DFF) through the Sapere Aude Starting Grant (5251-00032B).
GOR was supported by PINCODE (EP/X028119/1), EP/V009478/1 and by the UKRI grant, OCEAN, EP/Y014650/1.

\bibliographystyle{abbrvnat}
\bibliography{files/reference} 

@article{hoffman2014no,
  title={The {No-U-Turn} sampler: adaptively setting path lengths in {H}amiltonian {M}onte {C}arlo.},
  author={Hoffman, Matthew D and Gelman, Andrew and others},
  journal={Journal of Machine Learning Research},
  volume={15},
  number={1},
  pages={1593--1623},
  year={2014}
}

@article{kingma2014adam,
  title={Adam: A method for stochastic optimization},
  author={Kingma, Diederik P},
  journal={arXiv preprint arXiv:1412.6980},
  year={2014}
}

@inproceedings{ranganath2014black,
  title={Black box variational inference},
  author={Ranganath, Rajesh and Gerrish, Sean and Blei, David},
  booktitle={Artificial Intelligence and Statistics},
  pages={814--822},
  year={2014},
  organization={PMLR}
}

@article{blei2017variational,
  title={Variational inference: A review for statisticians},
  author={Blei, David M and Kucukelbir, Alp and McAuliffe, Jon D},
  journal={Journal of the American statistical Association},
  volume={112},
  number={518},
  pages={859--877},
  year={2017},
  publisher={Taylor \& Francis}
}

@article{polson2013bayesian,
  title={Bayesian inference for logistic models using P{\'o}lya--{G}amma latent variables},
  author={Polson, Nicholas G and Scott, James G and Windle, Jesse},
  journal={Journal of the American Statistical Association},
  volume={108},
  number={504},
  pages={1339--1349},
  year={2013},
  publisher={Taylor \& Francis}
}

@article{owen2007infinitely,
  title={Infinitely imbalanced logistic regression.},
  author={Owen, Art B},
  journal={Journal of Machine Learning Research},
  volume={8},
  number={4},
  year={2007}
}

@article{johndrow2019mcmc,
  title={{MCMC} for imbalanced categorical data},
  author={Johndrow, James E and Smith, Aaron and Pillai, Natesh and Dunson, David B},
  journal={Journal of the American Statistical Association},
  year={2019},
  publisher={Taylor \& Francis}
}

@article{mukherjee2023convergence,
  title={Convergence properties of data augmentation algorithms for high-dimensional robit regression},
  author={Mukherjee, Sourav and Khare, Kshitij and Chakraborty, Saptarshi},
  journal={Electronic Journal of Statistics},
  volume={17},
  number={1},
  pages={19--69},
  year={2023},
  publisher={The Institute of Mathematical Statistics and the Bernoulli Society}
}

@inproceedings{de1961bayesian,
  title={The {B}ayesian approach to the rejection of outliers},
  author={De Finetti, Bruno},
  booktitle={Proceedings of the 4th Berkeley Symposium on Mathematical Statistics and Probability},
  volume={1},
  pages={99--210},
  year={1961},
  organization={University of California Press Berkeley, CA, USA}
}

@article{jarner2007convergence,
  title={Convergence of heavy-tailed {M}onte {C}arlo {M}arkov chain algorithms},
  author={Jarner, S{\o}ren F and Roberts, Gareth O},
  journal={Scandinavian Journal of Statistics},
  volume={34},
  number={4},
  pages={781--815},
  year={2007},
  publisher={Wiley Online Library}
}

@article{o2012bayesian,
author = {Anthony O’Hagan and Luis Pericchi},
title = {{Bayesian heavy-tailed models and conflict resolution: A review}},
volume = {26},
journal = {Brazilian Journal of Probability and Statistics},
number = {4},
publisher = {Brazilian Statistical Association},
pages = {372 -- 401},
keywords = {Built-in robustness, heavy-tailed modelling, Outliers, partial rejection of information, rejection of information, theory of conflict resolution},
year = {2012},
doi = {10.1214/11-BJPS164},
URL = {https://doi.org/10.1214/11-BJPS164}
}

@article{roy2012convergence,
  title={Convergence rates for {MCMC} algorithms for a robust {B}ayesian binary regression model},
  author={Roy, Vivekananda},
  journal={Electronic Journal of Statistics},
  volume={6},
  pages={2463--2485},
  year={2012}
}

@book{coxeter1961introduction,
  title={Introduction to Geometry},
  author={Coxeter, Harold Scott Macdonald},
  year={1961}
}

@article{yang2020optimal,
  title={Optimal scaling of random-walk {M}etropolis algorithms on general target distributions},
  author={Yang, Jun and Roberts, Gareth O and Rosenthal, Jeffrey S},
  journal={Stochastic Processes and their Applications},
  volume={130},
  number={10},
  pages={6094--6132},
  year={2020},
  publisher={Elsevier}
}

@Book{Meyn2012,
  title     = {Markov Chains and Stochastic Stability},
  publisher = {Springer Science \& Business Media},
  year      = {2012},
  author    = {Meyn, Sean P and Tweedie, Richard L},
  owner     = {Jun},
  timestamp = {2018.05.06},
}

@article{tierney1994markov,
  title={Markov chains for exploring posterior distributions},
  author={Tierney, Luke},
  journal={The Annals of Statistics},
    volume={22},
  number={4},
  pages={1701--1728},
  year={1994},
  publisher={JSTOR}
}

@article{roberts1994simple,
  title={Simple conditions for the convergence of the {G}ibbs sampler and {M}etropolis-{H}astings algorithms},
  author={Roberts, Gareth O and Smith, Adrian FM},
  journal={Stochastic Processes and their Applications},
  volume={49},
  number={2},
  pages={207--216},
  year={1994},
  publisher={Elsevier}
}

@article{albert1993bayesian,
  title={Bayesian analysis of binary and polychotomous response data},
  author={Albert, James H and Chib, Siddhartha},
  journal={Journal of the American Statistical Association},
  volume={88},
  number={422},
  pages={669--679},
  year={1993},
  publisher={Taylor \& Francis}
}

@article{liu2004robit,
  title={Robit regression: a simple robust alternative to logistic and probit regression},
  author={Liu, Chuanhai},
  journal={Applied Bayesian Modeling and Casual Inference from Incomplete-Data Perspectives},
  pages={227--238},
  year={2004}
}

@article{ghosh2018use,
  title={On the Use of {C}auchy Prior Distributions for {B}ayesian Logistic Regression},
  author={Ghosh, Joyee and Li, Yingbo and Mitra, Robin},
  journal={Bayesian Analysis},
  volume={13},
  number={2},
  pages={359--383},
  year={2018}
}

@article{gelman2008weakly,
  title={A weakly informative default prior distribution for logistic and other regression models},
  author={Gelman, Andrew and Jakulin, Aleks and Pittau, Maria Grazia and Su, Yu-Sung},
  journal={The Annals of Applied Statistics},
  volume={2},
  number={4},
  pages={1360},
  year={2008},
  publisher={Institute of Mathematical Statistics}
}

@article{azzalini2003distributions,
  title={Distributions generated by perturbation of symmetry with emphasis on a multivariate skew t-distribution},
  author={Azzalini, Adelchi and Capitanio, Antonella},
  journal={Journal of the Royal Statistical Society Series B: Statistical Methodology},
  volume={65},
  number={2},
  pages={367--389},
  year={2003},
  publisher={Oxford University Press}
}

@article{branco2001general,
  title={A general class of multivariate skew-elliptical distributions},
  author={Branco, M{\'a}rcia D and Dey, Dipak K},
  journal={Journal of Multivariate Analysis},
  volume={79},
  number={1},
  pages={99--113},
  year={2001},
  publisher={Elsevier}
}

@article{bell2026ongoing,
  title={Stereographic {B}arker’s {MCMC} Proposal: Efficiency and Robustness at Your Disposal},
  author={Cameron Bell and Jun Yang and Krzysztof Łatuszyński and Gareth O. Roberts and Jeffrey S. Rosenthal},
  journal={manuscript in preparation},
  year={2026}
}

@article{chen2020fast,
  title={Fast mixing of {M}etropolized {H}amiltonian {M}onte {C}arlo: Benefits of multi-step gradients},
  author={Chen, Yuansi and Dwivedi, Raaz and Wainwright, Martin J and Yu, Bin},
  journal={Journal of Machine Learning Research},
  volume={21},
  number={92},
  pages={1--72},
  year={2020}
}

@article{brevsar2025central,
  title={Central Limit Theorem for ergodic averages of {M}arkov chains \& the comparison of sampling algorithms for heavy-tailed distributions},
  author={Brešar, Miha and Mijatovi{\'c}, Aleksandar and Roberts, Gareth},
  journal={arXiv preprint arXiv:2512.18255},
  year={2025}
}

@article{he2024mean,
  title={Mean-square analysis of discretized It{\^o} diffusions for heavy-tailed sampling},
  author={He, Ye and Farghly, Tyler and Balasubramanian, Krishnakumar and Erdogdu, Murat A},
  journal={Journal of Machine Learning Research},
  volume={25},
  number={43},
  pages={1--44},
  year={2024}
}

@article{yang2023complexity,
  title={Complexity results for {MCMC} derived from quantitative bounds},
  author={Yang, Jun and Rosenthal, Jeffrey S},
  journal={The Annals of Applied Probability},
  volume={33},
  number={2},
  pages={1459--1500},
  year={2023},
  publisher={Institute of Mathematical Statistics}
}

@article{milinanni2026ongoing,
  title={Rapid Mixing of Stereographic {MCMC} for Heavy-tailed Sampling},
  author={Federica Milinanni and Tyler Farghly and Jun Yang},
  journal={manuscript in preparation},
  year={2026}
}

@misc{bierkens2023methodsapplicationspdmpsamplers,
      title={Methods and applications of {PDMP} samplers with boundary conditions}, 
      author={Joris Bierkens and Sebastiano Grazzi and Gareth Roberts and Moritz Schauer},
      year={2023},
      eprint={2303.08023},
      archivePrefix={arXiv},
      primaryClass={stat.CO},
      url={https://arxiv.org/abs/2303.08023}, 
}

@article{yang2022stereographic,
  title={Stereographic {M}arkov chain {M}onte {C}arlo},
  author={Yang, Jun and {\L}atuszy{\'n}ski, Krzysztof and Roberts, Gareth O},
  journal={The Annals of Statistics},
  volume={52},
  number={6},
  pages={2692--2713},
  year={2024},
  publisher={Institute of Mathematical Statistics}
}

@article{roberts1996geometric,
  title={Geometric convergence and central limit theorems for multidimensional {H}astings and {M}etropolis algorithms},
  author={Roberts, Gareth O and Tweedie, Richard L},
  journal={Biometrika},
  volume={83},
  number={1},
  pages={95--110},
  year={1996},
  publisher={Oxford University Press}
}

@article{de2025robust,
    title = {Robust {Bayesian} model averaging for linear regression models with heavy-tailed errors},
    journal = {Journal of Applied Statistics},
    author = {De, Shamriddha and Ghosh, Joyee},
    year = {2025},
    note = {Publisher: Taylor \& Francis},
    pages = {1--27},
}

@article{dwivedi2019log,
  title={Log-concave sampling: {M}etropolis-{H}astings algorithms are fast},
  author={Dwivedi, Raaz and Chen, Yuansi and Wainwright, Martin J and Yu, Bin},
  journal={Journal of Machine Learning Research},
  volume={20},
  number={183},
  pages={1--42},
  year={2019}
}

@article{vasdekis2023speed,
  title={Speed up zig-zag},
  author={Vasdekis, Giorgos and Roberts, Gareth O},
  journal={The Annals of Applied Probability},
  volume={33},
  number={6A},
  pages={4693--4746},
  year={2023},
  publisher={Institute of Mathematical Statistics}
}

@article{gilks1998adaptive,
  title={Adaptive {M}arkov Chain {Monte Carlo} through regeneration},
  author={Gilks, Walter R and Roberts, Gareth O and Sahu, Sujit K},
  journal={Journal of the American Statistical Association},
  volume={93},
  number={443},
  pages={1045--1054},
  year={1998},
  publisher={Taylor \& Francis}
}

@article{bell2024adaptive,
  title={Adaptive stereographic {MCMC}},
  author={Bell, Cameron and {\L}atuszy{\'n}ski, Krzystof and Roberts, Gareth O},
  journal={arXiv preprint arXiv:2408.11780},
  year={2024}
}

@article{roberts1997weak,
  title={Weak convergence and optimal scaling of random walk {M}etropolis algorithms},
  author={Roberts, Gareth O and Gelman, Andrew and Gilks, Walter R },
  journal={The {A}nnals of {A}pplied {P}robability},
  volume={7},
  number={1},
  pages={110--120},
  year={1997},
  publisher={Institute of Mathematical Statistics}
}

@article{gelman1996efficient,
  title={Efficient {M}etropolis jumping rules},
  author={Gelman, Andrew and Roberts, Gareth O and Gilks, Walter R},
  journal={Bayesian Statistics 5},
  volume={5},
  pages={599--608},
  year={1996},
  publisher={Oxford University PressOxford}
}

@article{livingstone2022barker,
  title={The {B}arker proposal: Combining robustness and efficiency in gradient-based {MCMC}},
  author={Livingstone, Samuel and Zanella, Giacomo},
  journal={Journal of the Royal Statistical Society Series B: Statistical Methodology},
  volume={84},
  number={2},
  pages={496--523},
  year={2022},
  publisher={Oxford University Press}
}

@article{he2022heavy,
  title={Heavy-tailed sampling via transformed unadjusted {L}angevin algorithm},
  author={He, Ye and Balasubramanian, Krishnakumar and Erdogdu, Murat A},
  journal={arXiv preprint arXiv:2201.08349},
  year={2022}
}

@article{bertazzi2025sampling,
  title={Sampling with time-changed {M}arkov processes},
  author={Bertazzi, Andrea and Vasdekis, Giorgos},
  journal={arXiv preprint arXiv:2501.15155},
  year={2025}
}

@inproceedings{mousavi2023towards,
  title={Towards a complete analysis of {L}angevin {M}onte {C}arlo: Beyond {P}oincar{\'e} inequality},
  author={Mousavi-Hosseini, Alireza and Farghly, Tyler K and He, Ye and Balasubramanian, Krishna and Erdogdu, Murat A},
  booktitle={The Thirty Sixth Annual Conference on Learning Theory},
  pages={1--35},
  year={2023},
  organization={PMLR}
}

@article{bierkens2023sticky,
  title={Sticky {PDMP} samplers for sparse and local inference problems},
  author={Bierkens, Joris and Grazzi, Sebastiano and Meulen, Frank van der and Schauer, Moritz},
  journal={Statistics and Computing},
  volume={33},
  number={1},
  pages={8},
  year={2023},
  publisher={Springer}
}

@article{andrieu2024explicit,
  title={Explicit convergence bounds for {M}etropolis {M}arkov chains: Isoperimetry, spectral gaps and profiles},
  author={Andrieu, Christophe and Lee, Anthony and Power, Sam and Wang, Andi Q},
  journal={The Annals of Applied Probability},
  volume={34},
  number={4},
  pages={4022--4071},
  year={2024},
  publisher={Institute of Mathematical Statistics}
}

@article{gagnon_theoretical_2023,
  title={Theoretical properties of {B}ayesian Student-t linear regression},
  author={Gagnon, Philippe and Hayashi, Yoshiko},
  journal={Statistics \& Probability Letters},
  volume={193},
  pages={109693},
  year={2023},
  publisher={Elsevier}
}

@article{livingstone2019geometric,
  title={On the geometric ergodicity of {H}amiltonian {M}onte {C}arlo},
  author={Livingstone, Samuel and Betancourt, Michael and Byrne, Simon and Girolami, Mark},
  journal={Bernoulli},
  volume={25},
  number={4A},
  pages={3109--3138},
  year={2019},
  publisher={JSTOR}
}

@article{jarner2000geometric,
  title={Geometric ergodicity of {M}etropolis algorithms},
  author={Jarner, S{\o}ren Fiig and Hansen, Ernst},
  journal={Stochastic processes and their applications},
  volume={85},
  number={2},
  pages={341--361},
  year={2000},
  publisher={Elsevier}
}

@article{wang2025stereographic,
  title={Stereographic Multi-Try {M}etropolis Algorithms for Heavy-tailed Sampling},
  author={Wang, Zhihao and Yang, Jun},
  journal={arXiv preprint arXiv:2505.12487},
  year={2025}
}

@article{johnson2012variable,
  title={Variable transformation to obtain geometric ergodicity in the random-walk {M}etropolis algorithm},
  author={Johnson, Leif T and Geyer, Charles J},
  journal={The Annals of Statistics},
  pages={3050--3076},
  year={2012},
  publisher={JSTOR}
}

\newpage 
 
\appendix
\section{Appendix}\label{sec:appendix}
\subsection{Sub-Cauchy Projection and Jacobian determinant}\label{sec: forward_backward_jacobian}

For a fixed observer $o = (h_o, \ell_o) \in \cB^{d+1}(e_{d+1}) \subset \RR^{d+1}$, a point on the surface of the unit sphere $x = (h_x, \ell_x) \in \cC^{d+1}(e_{d+1}) \subset \RR^{d+1}$ and rescaling factors $R>0$, $\mu \in \RR^d$, let $\hat y = (y-\mu)/R$ be the rescaled point on the plane. The forward map 
\begin{equation*}
\label{eq: linear map}
y =\mathrm{SCP}_\theta(x) =  R\left(\frac{\ell_o}{\ell_o-\ell_x}h_x - \frac{\ell_x}{\ell_o - \ell_x} h_o\right) + \mu, \qquad h_x = x_{1_d}, \, \ell_x = x_{d+1}, 
\end{equation*}
is simply derived as the straight line connecting $o$, $x$ and $(\hat y, 0)$. To find the inverse map, we use that \[
\begin{pmatrix}
           h_{x}\\
           \ell_x -1
         \end{pmatrix}
         = M
         \begin{pmatrix}
           \hat y\\
           -1
         \end{pmatrix} + (1-M)
        \begin{pmatrix}
           h_o \\
           \ell_o -1
         \end{pmatrix},
         \]
for some $M \in [0,1]$. Since
$\|h_x\|^2 + (\ell_x-1)^2 = 1$, 
by taking the squared norm on both sides of the equation above, we have that
\[
1 = M^2 \left(\|\hat y - h_o\|^2 + \ell_o^2\right) +  \|h_o\|^2 + (\ell_o - 1)^2 + 2M\left( \langle \hat y - h_o, h_o\rangle - \ell_o (\ell_o -1)\right)
\]
which has one positive solution $ M$ that can be written explicitly in terms of $y, o, \ell_o$ and $R$ (see \eqref{eq: M}). Hence the \emph{inverse map} takes the form
\begin{equation*}
\label{eq: inverse map 1}
\begin{pmatrix}
    h_x\\ 
    \ell_x
\end{pmatrix}
     = [\mathrm{SCP}^{-1}_\theta(y)] = 
    \begin{pmatrix}
         M \hat y + (1-M) h_o\\
         (1-M)\ell_o
     \end{pmatrix}.
\end{equation*}

Let us derive the Jacobian determinant for the map $y= \pi_o(x)$. Let $\bar y = (\hat y, 0)$, $\theta_1$ be the angle between $\bar y - o$ and $x - e_{d+1}$, $\theta_2$ be the angle between $o - \bar y$ and $e_{d+1}$, $a = \|x-o\|$ and $b = \|\bar y - x\|$. Then, the Jacobian determinant is defined as
\begin{align*}
    J(y) =R^d\left( \frac{a+b}{a}\right)^{d}\frac{cos(\theta_1)}{\cos(\theta_2)}  &= \frac{R^d}{M^d}  \frac{\langle  \bar y - h_o, x - e_{d+1}\rangle}{\langle o - \bar y, e_{d+1}\rangle}\\
    &= R^d\left(\frac{\langle \hat y - h_o, M \hat y + (1-M) h_o\rangle  + \ell_o - \ell_o^2 (1-M)}{M^d\ell_o}\right)\\
     &= R^d\left(\frac{ M \|\hat y - h_o\|^2  + \langle \hat y - h_o, h_o\rangle  + \ell_o - \ell_o^2 (1-M)}{M^d\ell_o}\right).
\end{align*}

For $h_o = 0_d, \,\ell_o = 2$ (\cref{example 2}), we have
\[
M = \frac{4}{\|\hat y\|^2 + 4}
\]
and we recover the stereographic projection with Jacobian
\[
J_{\theta}(y) = R^d \frac{1}{M^d} = R^d \left(\frac{\|\hat y\|^2 + 4}{4}\right)^d.
\]
For $h_o =  0_d, \,\ell_o = 1$ (\cref{example 1}), we obtain the Cauchy projection with
\begin{equation*}
    M = \frac{1}{\sqrt{\|y\|^2 + 1}} 
    \label{eq: M cauchy}
\end{equation*}
and Jacobian
\begin{equation*}
J_{\theta}(y) = R^d \, \frac{1}{M^d} = R^d \frac{\|\hat y\|^2 + 1}{M^{d-1}} =  R^d( \|\hat y\|^2  + 1)^{(d+1)/2}.
\end{equation*}

\subsection{Proof of \cref{prop:invariance}}\label{sec:invariance}

We begin with the following lemma.
\begin{lemma}\label{lem:SCP}
Given two points $x$ and $y$ on unit sphere $\cS^{d+1}\subset \RR^{d+1}$, there is a unique greatest circle passing both $x$ and $y$, which can be written as
\[
c_o(\theta)=\cos(\theta)x + \sin(\theta)u,\quad \theta\in [0,2\pi),
\]
where $u= \frac{y-(x^Ty)x}{\sqrt{1-(x^Ty)^2}}$.
Suppose the angle between $x$ and $y$ are no larger than $\pi/2$  with $x_{d+1}<\ell_o-1$ and $y_{d+1}>\ell_o-1$. The two intersection points of this greatest circle with latitude $\ell_o-1$ are:
\[
p_{-}:=x\cos(\phi-\gamma)+u \sin(\phi-\gamma)
\]
which lies on the shorter arc on that greatest circle connecting $x$ and $y$ (aka, \emph{spherical linear interpolation} or the \emph{slerp}) and
\[
p_{+}:=x\cos(\phi+\gamma) + u \sin(\phi+\gamma),
\]
which lies on the longer complementary arc, and
\[
\phi:=\arccos\left(\frac{x_{d+1}}{\sqrt{x_{d+1}^2+u_{d+1}^2}}\right), \quad \gamma:=\arccos\left(\frac{\ell_o-1}{\sqrt{x_{d+1}^2+u_{d+1}^2}}\right).
\]
\end{lemma}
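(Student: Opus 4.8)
The plan is to handle the two assertions in turn: first the existence, uniqueness, and parametrization of the great circle, then the location of its two intersections with the latitude level $\ell_o-1$.

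For the great circle, I would first observe that the hypothesis that the angle between $x$ and $y$ lies in $(0,\pi/2]$ makes $x$ and $y$ linearly independent (neither equal nor antipodal), so they span a unique $2$-plane through the origin, whose intersection with $\cS^{d}$ is the unique great circle through both points. A short computation gives $\langle x,u\rangle=0$ and $\|u\|=1$, so $\{x,u\}$ is an orthonormal basis of that plane and $c_o(\theta)=\cos(\theta)\,x+\sin(\theta)\,u$ traverses the circle. Writing $y=\langle x,y\rangle\,x+\sqrt{1-\langle x,y\rangle^2}\,u$ shows $y=c_o(\alpha)$ with $\alpha=\arccos\langle x,y\rangle\in(0,\pi/2]$ and $x=c_o(0)$, so the shorter arc (the slerp) is exactly $\{c_o(\theta):\theta\in[0,\alpha]\}$.

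For the intersections, the latitude of $c_o(\theta)$ is $\lambda(\theta)=x_{d+1}\cos\theta+u_{d+1}\sin\theta$, and I would recast the condition $\lambda(\theta)=\ell_o-1$ via the harmonic-addition identity as $\rho\cos(\theta-\psi)=\ell_o-1$, where $\rho=\sqrt{x_{d+1}^2+u_{d+1}^2}$ and $\cos\psi=x_{d+1}/\rho,\ \sin\psi=u_{d+1}/\rho$. The decisive step is to show $u_{d+1}>0$, which pins down $\psi=\phi$ (rather than $-\phi$): since $\langle x,y\rangle=\cos\alpha\ge 0$ and, using $\ell_o\ge 1$, one has $y_{d+1}>\ell_o-1\ge\max\{x_{d+1},0\}\ge\langle x,y\rangle\,x_{d+1}$, the numerator $y_{d+1}-\langle x,y\rangle\,x_{d+1}$ of $u_{d+1}$ is strictly positive. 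Solving $\cos(\theta-\phi)=(\ell_o-1)/\rho$ then yields precisely the two roots $\theta=\phi\pm\gamma$ with $\gamma=\arccos\!\big((\ell_o-1)/\rho\big)$; the argument of the $\arccos$ is admissible because $y$ already realizes a latitude above $\ell_o-1$ on the circle, forcing $(\ell_o-1)/\rho\le 1$.

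Finally I would locate the two roots relative to the short arc $[0,\alpha]$ from the shape of $\lambda(\theta)=\rho\cos(\theta-\phi)$, which increases up to its maximum at $\theta=\phi$ and decreases afterward; thus $\phi-\gamma$ is the unique up-crossing and $\phi+\gamma$ the unique down-crossing of the level $\ell_o-1$. Since $\lambda(0)=x_{d+1}<\ell_o-1<y_{d+1}=\lambda(\alpha)$, the intermediate value theorem together with this monotonicity forces the up-crossing into $(0,\alpha)$, so $p_-=c_o(\phi-\gamma)$ lies on the shorter arc; and if $\phi+\gamma\le\alpha$ held, then $\lambda$ would fall back below $\ell_o-1$ before $\theta=\alpha$, contradicting $\lambda(\alpha)>\ell_o-1$, whence $\phi+\gamma>\alpha$ and $p_+=c_o(\phi+\gamma)$ lies on the longer complementary arc. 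I expect the main obstacle to be the sign bookkeeping in the third paragraph: establishing $u_{d+1}>0$—and therefore the identification $\psi=\phi$—requires combining all of the geometric hypotheses ($\ell_o\ge 1$, the bright/dark split $x_{d+1}<\ell_o-1<y_{d+1}$, and the angle bound), and it is this, rather than the trigonometric solving, that makes the labeling of $p_-$ and $p_+$ come out as stated.
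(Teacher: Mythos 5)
Your proof is correct and follows essentially the same route as the paper's: parametrize the great circle by $c_o(\theta)=\cos(\theta)x+\sin(\theta)u$, solve the latitude equation $x_{d+1}\cos\theta+u_{d+1}\sin\theta=\ell_o-1$ to obtain $\theta=\phi\pm\gamma$, and identify which root lies on the shorter arc. In fact you supply the details the paper merely asserts---notably the sign check $u_{d+1}>0$ (using $\ell_o\ge 1$ and $x^Ty\ge 0$), which pins the phase down as $\phi$ rather than $-\phi$, and the intermediate-value/monotonicity argument placing $\phi-\gamma$ in $(0,\alpha)$ and $\phi+\gamma$ beyond $\alpha$.
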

\begin{proof}
    For $c_o(\theta)$, one can verify $c_o(0)=x$ and $c_o(\alpha)=x(x^Ty)+y-(x^Ty)x=y$, where $\alpha:=\arccos(x^Ty)<\pi/2$ by our assumption. Also, $u$ is a unit vector, as
    \[
    u=\frac{y-(x^Ty)x}{\sqrt{1-(x^Ty)^2}}=\frac{y-(x^Ty)x}{\|y-(x^Ty)x\|}.
    \]
    The two intersection points should satisfy
    \[
    c_o(\theta)_{d+1}=x_{d+1}\cos\theta + u_{d+1}\sin\theta = \ell_o-1,
    \]
    which gives the solutions $\theta=\phi\pm \gamma$. Finally, the one with $\phi-\gamma$ is on the shorter arc.
\end{proof}

By \cref{lem:SCP}, one can easily check that the SCS proposal is symmetric (see \cref{fig: stepping out} for an illustration) and therefore the detailed balance condition holds for SCS, implying the existence of the stationary distribution $\pi$. By \citep[Theorem 13.0.1]{Meyn2012}, \cref{prop:invariance} holds if the Markov chain is aperiodic, $\pi$-irreducibile, and positive Harris recurrent. SCS is aperiodic by \cite[Theorem 3(i)]{roberts1994simple} using the facts that the acceptance probability does not equal to zero and the proposal distribution as a Markov kernel is aperiodic. Furthermore, $\pi$-irreducibility comes from \cite[Theorem 3(ii)]{roberts1994simple} using the fact that the proposal distribution as a Markov kernel is $\pi$-irreducible. Since $\pi$ is finite, the chain is positive recurrent \cite[pp.1712]{tierney1994markov}. Harris recurrence comes directly from \cite[Corollary 2]{tierney1994markov}.

\subsection{Proof of \cref{thm_uniform_ergodicity}}
    First of all, if $\sup_{y\in\mathbb{R}^d}\pi(y)J(y)=\infty$, SCS is not even geometrically ergodic (\cite[Proposition 5.1]{roberts1996geometric}). This implies that the condition in \cref{thm_uniform_ergodicity} is necessary. Therefore, it suffices to show $\sup_{y\in\mathbb{R}^d}\pi(y)J(y)<\infty$ is also a sufficient condition. 
    
Recall that $\cS^{d+1}$ denotes the unit sphere centered at origin, that is, $\cS^{d
1}:=\{z\in \Reals^{d+1}: \sum_{i=1}^{d+1} z_i^2=1\}$, and let $\pi_S(z)$ be the transformed target on the unit sphere. Note that when $\ell_o<2$, part of the sphere is not used, which we refer to as the ``dark side'' of the sphere, defined by $\DS(0)$ in the following. We will also define ``almost dark side'' with size $\delta$, denoted as $\DS(\delta)$. When $\ell_o=2$, our definition of $\DS(\delta)$ reduces to the definition of ``Arctic Circle'', denoted as $\AC(\delta)$, in \cite[Proof of Theorem 2.1]{yang2022stereographic}. 
	\begin{enumerate}
		\item We define the ``dark side'' of the sphere as $$ \DS(0):=\{(x,\ell_x-1)\in \cS^{d+1}: \ell_x\ge \ell_o\}.$$
        \item Similarly, we can define the ``almost dark side'' with ``size'' $\delta\ge 0$ as
     $$\DS(\delta):=\{(x,\ell_x-1)\in \cS^{d+1}: \ell_x\ge \ell_o-\delta\},$$
     which contains a small area of the boundary of the ``bright side'' with ``size'' $\delta$.

		\item Same as in \cite[Proof of Theorem 2.1]{yang2022stereographic}, we use $\HS(z,0):=\{z'\in\cS^{d+1}: z^Tz'\ge 0\}$ to denote the ``hemisphere'' from $z$. and $\HS(z,\epsilon'):=\{z'\in\cS^{d+1}: z^Tz'\ge \epsilon'\}$ to denote the area left after ``cutting'' the hemisphere's boundary of ``size'' $\epsilon'\ge 0$. 
	\end{enumerate}

     The rest of the proof is a generalisation of  \cite[Proof of Theorem 2.1]{yang2022stereographic}, which is a special case for $\ell_o=2$, to any given $\ell_o\in [1,2]$. We prove uniform ergodicity by showing the compactification of the ``bright side'' of the sphere, denoted as $\overline{\text{DS}^c(0)}$, is a small set. More precisely, we will show the $3$-step transition kernel satisfies the following minorisation condition:
	\[
	P^3(z,A)\ge \epsilon \pi_S(A), \quad z\in \overline{\DS^c(0)}
	\]
	for all measurable set $A\subset \overline{\DS^c(0)}$, where $\epsilon>0$. {The rationale behind establishing a $3$-step minorisation condition is that this facilitates 
 the Markov chain on the sphere being able to reach any point on the ``bright side'' of the sphere, 
  and there exists enough probability mass around ``$2$-stop'' paths to ensure that the chain can avoid stopping too close to the boundary of the ``dark side'', the potentially problematic region.}

	Now we almost completely follow \cite[Proof of Theorem 2.1]{yang2022stereographic}. Consider the $3$-step path $z\to z_1\to z_2\to z'\in A$ where $z\in\overline{\DS^c(0)}$ is the starting point, $z_1$ and $z_2$ are two intermediate points on the sphere, and $z'\in A$ is final point. 
		Denoting $q(z,\cdot)$ as the proposal density, we have
	\[
	p(z,z')\ge q(z,z')\left(1\wedge \frac{\pi_S(z')}{\pi_S(z)}\right).
	\]
	Then the  $3$-step transition kernel can be bounded below by
\begin{equation}
    \begin{split}
	P^3(z,A)
	&\ge\int_{z'\in A}\int_{z_2\in\cS^d}\int_{z_1\in\cS^d} q(z,z_1)q(z_1,z_2)q(z_2,z')\\
	&\qquad \cdot\left(1\wedge\frac{\pi_S(z_1)}{\pi_S(z)}\right)\left(1\wedge\frac{\pi_S(z_2)}{\pi_S(z_1)}\right)\left(1\wedge\frac{\pi_S(z')}{\pi_S(z_2)}\right)\nu(\dee z_1)\nu(\dee z_2)\nu(\dee z') 
    \end{split}
\end{equation}
	where $\nu(\cdot)$ denotes the Lebesgue measure on $\cS^{d+1}$. 

	By the condition we assumed, $\pi(y)>0$ is continuous in $\Reals^d$, which implies $\pi_S(z)$ is positive and continuous in any compact subset of the ``bright side'' of the sphere $\DS^c(0)$. Then using the fact that a continuous function on a compact set is bounded, if we rule out $\DS(\epsilon)$, $\pi_S(z)$ is bounded away from $0$. That is,
	\[
	\inf_{z\notin \DS(\epsilon)}\pi_S(z)\ge \delta_{\epsilon}>0, \forall \epsilon\in (0,1),
	\]
	where $\delta_{\epsilon}\to 0$ as $\epsilon\to 0$.
	
	Furthermore, since the surface area of $\cS^{d+1}$ is finite and $\pi_S(z)\propto \pi(y)J(y)$, we know
	$\sup_{y\in \Reals^d}\pi(y)J(y)<\infty$ implies that $M:=\sup \pi_S(z)<\infty$. Therefore, we have
\begin{equation}
    \begin{split}
        	P^3(z,A)
	&\ge\int_{z'\in A}\int_{z_2\in\cS^d\setminus \DS(\epsilon)}\int_{z_1\in\cS^d\setminus\DS(\epsilon)} q(z,z_1)q(z_1,z_2)q(z_2,z')\\
	&\qquad \cdot\left(1\wedge\frac{\delta_{\epsilon}}{M}\right)^2\left(\frac{\pi_S(z')}{M}\right)\nu(\dee z_1)\nu(\dee z_2)\nu(\dee z').
    \end{split}
\end{equation}
	Next, we consider the term 
	$q(z,z_1)q(z_1,z_2)q(z_2,z')$.
	For our algorithm, the proposal can cover the whole hemisphere except the boundary, by ``cutting'' the boundary a little bit, the proposal density will be bounded below. That is, we have
	\[
	q(z,z')\ge \delta'_{\epsilon'}>0, \quad \forall z'\in \HS(z,\epsilon'), \epsilon'\in (0,1),
	\]
	where $\delta'_{\epsilon'}\to 0$ as $\epsilon'\to 0$.
	
	Therefore, if 
	$z_1\in \HS(z,\epsilon')$ and $z_2\in \HS(z',\epsilon')$
	then 
	$q(z,z_1)\ge \delta'_{\epsilon'}$ and $q(z_2,z')\ge \delta'_{\epsilon'}$.
	Then we have
\begin{equation}
    \begin{split}
        	&\int_{z_2\in\HS(z',\epsilon')\setminus \DS(\epsilon)}\int_{z_1\in\HS(z,\epsilon')\setminus\DS(\epsilon)} q(z,z_1)q(z_1,z_2)q(z_2,z')\nu(\dee z_1)\nu(\dee z_2)\\
	&\ge (\delta'_{\epsilon'})^3\int_{z_2\in\HS(z',\epsilon')\setminus \DS(\epsilon)}\int_{z_1\in\HS(z,\epsilon')\setminus\DS(\epsilon)}\Ind_{z_1\in \HS(z_2,\epsilon')}\nu(\dee z_1)\nu(\dee z_2).
    \end{split}
\end{equation}

	Note that as the ``bright side'' is no smaller than a hemisphere, $3$ steps are enough to reach any point $z'$ from any $z$ both on the ``bright side''. Furthermore, as the ``dark side'' of the sphere is no larger than a hemisphere, it is clear that under the Lebesgue measure $\nu$ on $\cS^{d+1}$, we can define the following positive constant $C$:
	\[
	C:=\inf_{z,z'\in \DS^c(0)}\int_{z_2\in\HS(z',0)\setminus \DS(0)}\int_{z_1\in\HS(z,0)\setminus\DS(0)}\Ind_{z_1\in \HS(z_2,0)}\nu(\dee z_1)\nu(\dee z_2)>0.
	\]
	Now consider the following function of $\epsilon\ge 0$ and $\epsilon'\ge 0$
	\[
	C_{\epsilon,\epsilon'}:=\inf_{z,z'\in \DS^c(0)}\int_{z_2\in\HS(z',\epsilon')\setminus \DS(\epsilon)}\int_{z_1\in\HS(z,\epsilon')\setminus\DS(\epsilon)}\Ind_{z_1\in \HS(z_2,\epsilon')}\nu(\dee z_1)\nu(\dee z_2).
	\]
    It is clear that $C\ge 	C_{\epsilon,\epsilon'}$. The difference between them can be bounded by
\begin{equation}
    \begin{split}
    C-	C_{\epsilon,\epsilon'}\le &\sup_{z\in \DS^c(0)} \nu(\{z_1: z_1\in (\HS(z,0)\setminus\HS(z,\epsilon'))\cap (\DS(\epsilon)\setminus\DS(0))\})\\
    &+\sup_{z'\in \DS^c(0)}\nu(\{z_2: z_2\in (\HS(z',0)\setminus\HS(z',\epsilon'))\cap (\DS(\epsilon)\setminus\DS(0))\})\\
    &+\sup_{z_2\in \DS^c(0)}\nu(\{z_1: z_1\in \HS(z_2,0)\setminus\HS(z_2,\epsilon')\}).
    \end{split}
\end{equation}
	Then clearly the upper bound is continuous w.r.t.~both $\epsilon$ and $\epsilon'$ and goes to zero when $\epsilon\to 0$ and $\epsilon'\to 0$. Therefore, there exists $\epsilon>0$ and $\epsilon'>0$ such that $C-C_{\epsilon,\epsilon'}\le \frac{1}{2} C$, that is
	\[
	C_{\epsilon,\epsilon'}\ge\frac{1}{2}C>0.
	\]
	Using such $\epsilon$ and $\epsilon'$ we have
	\[
	\int_{z_2\in\HS(z',\epsilon')\setminus \DS(\epsilon)}\int_{z_1\in\HS(z,\epsilon')\setminus\DS(\epsilon)} q(z,z_1)q(z_1,z_2)q(z_2,z')\nu(\dee z_1)\nu(\dee z_2)\ge \frac{1}{2}C(\delta'_{\epsilon'})^3>0.
	\]
	Then we have
	\[
	P^3(z,A)\ge \left(1\wedge \frac{\delta_{\epsilon}}{M}\right)^2\frac{1}{2}C(\delta'_{\epsilon'})^3\int_{z'\in A} \left( \frac{\pi_S(z')}{M}\right)\nu(\dee z')
	\]
	Note that $\pi_S(z')>0$ almost surely w.r.t.~$\nu$ if $z'\in \DS^c(0)$. 
	We define
	\[
	g(z'):=\begin{cases}
	    \left(1\wedge \frac{\delta_{\epsilon}}{M}\right)^2\frac{1}{2}C(\delta'_{\epsilon'})^3 \frac{\pi_S(z')}{M}, &\quad \forall z'\in \DS^c(0)\\
        0,  &\quad \forall z'\in \DS(0).
	\end{cases}
	\]
	Note that $g(z)\propto \pi_S(z)$, Hence, we have established
	\[
	P^3(z,A)\ge \int_{z'\in A}g(z')\nu(\dee z')\ge \epsilon \pi_S(A),
	\]
	where $\epsilon=\left(1\wedge \frac{\delta_{\epsilon}}{M}\right)^2\frac{1}{2}C(\delta'_{\epsilon'})^3 \frac{1}{M}>0$. This established the desired minorisation condition and completed the proof.

\subsection{Choice for the latitude $\ell_o$}\label{app: latitude}
The variational method presented in Section~\ref{sec: tuning parameters} applies for a fixed latitude $\ell_o \in [1,2)$, as the support of the density on the bright side depends on this parameter. Our choice of $\ell_o$ is guided by simple geometric properties of the hyper-sphere in high dimensions. 
For $o \in \cB^{d+1}(e_{d+1})$, let $$\cD^{d+1}_o = \cS^{d+1}(e_{d+1}) \setminus \cC^{d+1}_o =  \{ x \in \RR^{d+1} \colon \|h_x\|^2 + (\ell_x - 1)^2 = 1; \ell_x \ge \ell_o\}$$ be the dark side of the sphere surface, that is removed in the SCP. The following result shows that the removed surface region has asymptotically negligible surface area compared to the sphere surface, when $\ell_o \ge 1 + \mathcal{O}(\log d/\sqrt{d})$.

\begin{proposition}\label{prop: ratio of volumes} For any $d \ge 1$ and $\ell_o \in (1,2)$, 
\begin{equation}
\label{eq: ratio}
    \frac{|\cD_o^{d+1}|}{|\cS^{d+1}|}
    \le \frac{e^{1/2}}{2}\sqrt{d+1}\left[1-(\ell_o-1)^2\right]^{\frac{d-1}{2}},
\end{equation}
where $|\cdot|$ is the $d$-dimensional surface area.
\end{proposition}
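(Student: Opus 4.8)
The plan is to reduce the surface-area ratio to a one-dimensional integral over the polar angle measured from the north pole, and then to bound the resulting numerator and denominator separately. The quantity of interest is simply the normalized surface area of the spherical cap $\{\ell_x\ge\ell_o\}$.

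First I would parametrize $\cS^{d+1}(e_{d+1})$ by the polar angle $\vartheta\in[0,\pi]$ from the north pole, writing $x-e_{d+1}=(\sin\vartheta\,\omega,\cos\vartheta)$ with $\omega$ ranging over the unit $(d-1)$-sphere of directions in $\RR^d$. Then $\ell_x=1+\cos\vartheta$, the surface measure factorizes as $(\sin\vartheta)^{d-1}\,d\vartheta$ times the uniform measure on the direction sphere, and the dark side $\{\ell_x\ge\ell_o\}$ is exactly $\vartheta\in[0,\vartheta_o]$ with $\vartheta_o:=\arccos(\ell_o-1)\in(0,\pi/2)$. Integrating out the direction sphere, the common factor $|\cS^{d-1}|$ cancels, so
\[
\frac{|\cD_o^{d+1}|}{|\cS^{d+1}|}=\frac{\int_0^{\vartheta_o}(\sin\vartheta)^{d-1}\,d\vartheta}{\int_0^{\pi}(\sin\vartheta)^{d-1}\,d\vartheta}.
\]
For the numerator I would use that $\sin$ is increasing on $[0,\vartheta_o]\subseteq[0,\pi/2]$, so $(\sin\vartheta)^{d-1}\le(\sin\vartheta_o)^{d-1}$ throughout, giving
\[
\int_0^{\vartheta_o}(\sin\vartheta)^{d-1}\,d\vartheta\le\vartheta_o(\sin\vartheta_o)^{d-1}\le\frac{\pi}{2}\big(1-(\ell_o-1)^2\big)^{(d-1)/2},
\]
where $\sin\vartheta_o=\sqrt{1-(\ell_o-1)^2}$ and $\vartheta_o<\pi/2$. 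Note this already produces the correct power $(d-1)/2$.

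The main work is the lower bound on the denominator $I_{d-1}:=\int_0^\pi(\sin\vartheta)^{d-1}\,d\vartheta$, for which I need the correct $d^{-1/2}$ decay with an explicit constant. I would invoke the Wallis-type identity $I_{d-1}I_d=2\pi/d$ together with the monotonicity $I_d\le I_{d-1}$ (since $0\le\sin\le1$), which immediately yields $I_{d-1}^2\ge I_{d-1}I_d=2\pi/d$, hence $I_{d-1}\ge\sqrt{2\pi/d}$. Equivalently one may write $I_{d-1}=\sqrt\pi\,\Gamma(d/2)/\Gamma((d+1)/2)$ and apply Gautschi's inequality to the gamma ratio. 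Combining the two bounds gives
\[
\frac{|\cD_o^{d+1}|}{|\cS^{d+1}|}\le\frac{\pi/2}{\sqrt{2\pi/d}}\big(1-(\ell_o-1)^2\big)^{(d-1)/2}=\frac{\sqrt{\pi}}{2\sqrt{2}}\sqrt{d}\,\big(1-(\ell_o-1)^2\big)^{(d-1)/2},
\]
and since $\tfrac{\sqrt\pi}{2\sqrt2}\le\tfrac{e^{1/2}}{2}$ and $\sqrt d\le\sqrt{d+1}$, the stated inequality follows (in fact with room to spare).

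I expect the denominator estimate to be the only delicate step. The polar reduction and the numerator bound are elementary, but pinning down the $\sqrt{d}$-scaling of $I_{d-1}$ with a usable constant requires the Wallis identity (or a sharp gamma-ratio inequality), and one must then verify that the coefficient $e^{1/2}/2$ genuinely dominates the $\sqrt\pi/(2\sqrt2)$ produced by the argument.
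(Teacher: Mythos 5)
Your proof is correct, and it takes a genuinely different route from the paper's. You reduce the cap-to-sphere ratio to the one-dimensional polar integral
\[
\frac{|\cD_o^{d+1}|}{|\cS^{d+1}|}
=\frac{\int_0^{\vartheta_o}(\sin\vartheta)^{d-1}\,\dee\vartheta}{\int_0^{\pi}(\sin\vartheta)^{d-1}\,\dee\vartheta},
\qquad \vartheta_o=\arccos(\ell_o-1)\in(0,\pi/2),
\]
bound the numerator pointwise by $\tfrac{\pi}{2}\bigl(1-(\ell_o-1)^2\bigr)^{(d-1)/2}$, and lower-bound the denominator via the Wallis identity $I_{d-1}I_d=2\pi/d$ combined with the monotonicity $I_d\le I_{d-1}$, giving $I_{d-1}\ge\sqrt{2\pi/d}$; all three steps check out (including the edge case $d=1$), and the resulting constant $\tfrac{\sqrt{\pi}}{2\sqrt{2}}\sqrt{d}$ is dominated by $\tfrac{e^{1/2}}{2}\sqrt{d+1}$ precisely because $\pi/2\le e$. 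The paper instead argues probabilistically: it identifies the ratio with $\mathbb{P}(X\le cY)$ for independent $X\sim\chi^2_d$, $Y\sim\chi^2_1$ and $c=\bigl(1-(\ell_o-1)^2\bigr)/(\ell_o-1)^2$ (via the Gaussian representation of the uniform distribution on the sphere), then applies a Chernoff bound with the chi-square moment generating functions, optimizes the exponent at $t^*=\tfrac{d-c}{2c(d+1)}$, and cleans up using $(1+1/d)^d<e$. Your route is more elementary and self-contained, and in fact yields a uniformly sharper bound than the one stated (both $\sqrt{\pi/2}$ versus $\sqrt{e}$ in the constant and $\sqrt{d}$ versus $\sqrt{d+1}$), since the polar reduction is exact where the paper's Chernoff step and its probabilistic identification introduce slack. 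What the paper's approach buys is flexibility: the chi-square/$F$-distribution representation simultaneously delivers the exact cap area as a regularized incomplete beta function $I_{c/(c+1)}(d/2,1/2)$, and the MGF machinery transfers directly to other tail events; your ratio of sine integrals encodes the same exact quantity, but the Wallis-based lower bound is specific to this geometry. Either argument fully establishes the proposition for all $d\ge1$ and $\ell_o\in(1,2)$.
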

\begin{proof}
    Let \(X\) and \(Y\) be independent chi-square random variables with \(d\) and \(1\) degrees of freedom, respectively. Using the fact that the surface area ratio can be written as an angle ratio, we are interested in the probability
$$
    \frac{\textrm{surface area of ``dark side''}}{\textrm{surface of the sphere}}=\mathbb{P}\left(X \le \frac{1-(\ell_o-1)^2}{(\ell_o-1)^2} Y\right).
$$
We first start with the exact formula of
$  \mathbb{P}(X\le c Y)$,
for some constant $c$. Using the \(F\)-distribution, we have $\frac{X/d}{Y/1} \sim F_{d,1}$ which implies \[
  \mathbb{P}(X < cY)
=   \mathbb{P}\left(\tfrac{X}{Y} < c\right)
=   \mathbb{P}\left(d\,F_{d,1} < c\right)
=   \mathbb{P}\left(F_{d,1} < \tfrac{c}{d}\right)
= F_F\left(\tfrac{c}{d};\,d,1\right),
\]
where \(F_F(x;d,1)\) denotes the c.d.f.~function of an \(F\)-distribution with \((d,1)\) degrees of freedom. The c.d.f.~of the \(F\)-distribution can be written in terms of the regularized incomplete beta function:
\[
F_F(x;d,1)
= I_{\!\frac{d\,x}{d\,x+1}}\left(\tfrac{d}{2}, \tfrac{1}{2}\right),
\]
where
$I_z(a,b)
= \frac{1}{B(a,b)} \int_{0}^{z} t^{a-1} (1-t)^{b-1} \, dt$.
Substituting \(x = \tfrac{c}{d}\) yields the exact formula:
\[
  \mathbb{P}(X < cY)
= I_{\!\frac{d\,(c/d)}{d\,(c/d)+1}}\left(\tfrac{d}{2}, \tfrac{1}{2}\right)
= I_{\!\frac{c}{c+1}}\left(\tfrac{d}{2}, \tfrac{1}{2}\right).
\]
Next, we derive an upper bound for large enough dimension $d$. By Markov's inequality, for any \(t > 0\) such that the moment generating functions exist,
\[
  \mathbb{P}(X < cY)
=   \mathbb{P}\left(e^{-tX} \ge e^{-t\,cY}\right)
\le \mathbb{E}\left[e^{-tX + t\,cY}\right]
= M_X(-t)\,M_Y(tc).
\]
For \(Z \sim \chi^2_k\), the moment generating function is
\[
M_Z(u) = (1 - 2u)^{-k/2}, \quad u < \tfrac12.
\]
Hence, for \(0 < t < 1/(2c)\),
\[
  \mathbb{P}(X < cY)
\le (1 + 2t)^{-d/2}\,(1 - 2c\,t)^{-1/2}.
\]
Choosing $t^* = \frac{d - c}{2c\,(d + 1)}$,
one obtains the tightest bound of that form
\[
  \mathbb{P}(X < cY)
\le \left(1+\frac{1}{c}\frac{d-c}{d+1}\right)^{-d/2}\cdot\left(1-\frac{d-c}{d+1}\right)^{-1/2}= \sqrt{\frac{d+1}{1+c}}\left[\frac{d+1}{d}\cdot\frac{c}{c+1}\right]^{d/2}.
\]
Using $(1+1/d)^d<e$, we have
\[
  \mathbb{P}(X < cY)\le \sqrt{\frac{d+1}{c+1}}e^{1/2}\left(\frac{c}{c+1}\right)^{d/2}=\sqrt{d+1}e^{1/2}\sqrt{\frac{1}{c+1}\left(1-\frac{1}{c+1}\right)}\left(\frac{c}{c+1}\right)^{\frac{d-1}{2}}.
\]
Since $\sqrt{\frac{1}{c+1}\left(1-\frac{1}{c+1}\right)}\le \frac{1}{2}$, we have
\[
  \mathbb{P}(X < cY)\le \frac{e^{1/2}}{2}\sqrt{d+1}\left(\frac{c}{c+1}\right)^{\frac{d-1}{2}}.
\]
 Finally, substituting $c=\frac{1-(\ell_o-1)^2}{(\ell_o-1)^2}$ gives the desired result.
\end{proof}
\cref{prop: ratio of volumes} implies that, by restricting $\ell_o \in (1 + \frac{c}{\sqrt{d}}, 2)$, with $c \sim \log d $ the ratio \eqref{eq: ratio} shrinks fast as the dimension increases, making increasingly unlikely for the SPS proposal to land on the dark side. 
However, for Cauchy distributions, \cref{example 1} shows that the optimal choice corresponds to $\ell_o = 1$. In practice, we found $\ell_o = 1.1$ to be a reasonable choice in all our simulations, and we leave for future work the fine tuning of this parameter.

\subsection{Slow mixing of DA-Gibbs}
\label{sec: da gibbs}
For Bayesian logistic regression, it has been noticed that DA-Gibbs also performs poorly when the two classes are imbalanced. \citet{johndrow2019mcmc} studied the mixing behavior of DA-Gibbs in the infinitely imbalanced asymptotic regime introduced by \citet{owen2007infinitely}. \citet{johndrow2019mcmc} argue that the slow mixing is due to the discrepancy between the width of the posterior and the step-size of Gibbs (the step-size converges to 0 at a faster rate than the concentration of the posterior as the sample size $n\to\infty$). We notice a similar phenomenon when the two classes are perfectly separable, providing an explanation for the poor performance of DA-Gibbs in the empirical results of \cite{ghosh2018use}.

Consider a one-dimensional logistic regression with student $t$ prior with $\nu$ degrees of freedom
\[
\mathbb{P}(Y_i = 1 \mid \beta)
= \sigma(x_i \beta),
\qquad
\sigma(t) = \frac{1}{1+e^{-t}}
\]
and assume the following separable data
\[
(x_i,\, y_i) = 
\begin{cases}
     (n^{-1/2}, 1) & i \le n/2 \\
      (-n^{-1/2}, 0) & i>n/2 
\end{cases}
\]
for $i = 1,2,\dots, n$.
In the Gibbs sampler, the student $t$ prior $t_\nu(0, \tau^2)$ is represented as a scale-mixture of Gaussians:
\[
    \beta\mid\lambda \sim N \big(0,\frac{\tau^2}{\lambda} \big),
    \quad
    \lambda \sim \mathrm{Gamma}\big(\frac{\nu}{2},\frac{\nu}{2}\big),
\]
and P\'olya-Gamma latent variables are used for the logistic link~\citep{polson2013bayesian}.

Define the rescaled coefficient $\eta =\beta/\sqrt n $.
In the $\eta$-parameterization, the log likelihood is
\[
\log p(y \mid \eta)
= \frac{n}{2} \log \sigma(\eta)
 + \frac{n}{2} \log (1-\sigma(-\eta))
= n \log \sigma(\eta),
\]
and the derivative w.r.t. $\eta$ is
\[
\frac{\partial}{\partial \eta} \log p(y\mid\eta)
= n \sigma(-\eta).
\]
The Student $t$ prior in $\beta$ induces the prior in $\eta$:
\[
\pi_\eta(\eta)
\propto 
\left(1 + \frac{n\eta^2}{\nu \tau^2}\right)^{-(\nu+1)/2}.
\]
Moreover, we have
\[
\frac{\partial}{\partial \eta}\log \pi_\eta(\eta)
= -\frac{(\nu+1)n\eta}{\nu\tau^2 + n\eta^2}.
\]
Combining the likelihood and prior, the derivative of the log posterior is given by
\[
\frac{\partial}{\partial \eta} \log p(\eta\mid y)
= n \sigma(-\eta)
  -\frac{(\nu+1)n\eta}{\nu\tau^2 + n\eta^2}.
\]

We now study the scale of the posterior standard deviation and the scale of the step-size in $\beta$ for large $n$.
\begin{itemize}
\item \textbf{Scale of posterior width.} We first argue that, for large $n$, the posterior standard deviation in $\eta$ is of order $O(\sqrt{\log n})$.
For large $\eta$, $\sigma(-\eta) \sim e^{-\eta}$ and $\frac{\partial}{\partial \eta}\log \pi_\eta(\eta)
\sim - \frac{\nu+1}{\eta}$, so the mode
$\hat\eta_n$ solves
\[
n e^{-\hat\eta_n}
\sim
\frac{\nu+1}{\hat\eta_n}
\quad\Longrightarrow\quad
\hat\eta_n \sim \log n.
\]

Differentiating again at the mode gives
\[
-\frac{\partial^2}{\partial \eta^2}
 \log p(\eta\mid y)
\sim
n \sigma(-\hat\eta_n)
+ \frac{\nu+1}{\hat\eta_n^2} \sim \frac{\nu+1}{\hat\eta_n}
=O (\frac{1}{\log n}).
\]
Hence, the posterior standard deviation in $\eta$ is about $O(\sqrt{\log n})$. In terms of $\beta$, the posterior standard deviation is of order $O(\sqrt{n\log n})$.

\item \textbf{Scale of step-size. }
In the P\'olya-Gamma Gibbs sampler~\citep{polson2013bayesian}, the P\'olya-Gamma variables $\omega_i$ have the conditional distribution
\[
\omega_i \mid \eta \sim \mathrm{PG}(1, x_i\beta)
= \mathrm{PG}(1,\pm \eta),
\]
with mean given by
\[
\mathbb{E}[\omega_i\mid\eta]
= \frac{1}{2|\eta|}
  \tanh\!\left(\frac{|\eta|}{2}\right)
\sim \frac{1}{2|\eta|}
\quad(|\eta|\to\infty).
\]
Moreover, the conditional variance of $\beta$ given the P\'olya-Gamma variables is given by
\[
\Var{\beta\mid \omega}
=
\Big( X^\top \Omega X + \tau^{-2}\lambda \Big)^{-1}.
\]
Note that
$
X^\top \Omega X
= \sum_i \omega_i x_i^2
= \frac{1}{n}\sum_i \omega_i.
$
Near the mode of $\eta$, we have $\mathbb{E}[\omega_i\mid \eta]\sim \frac{1}{2\log n } $. Thus $X^\top\Omega X\sim\frac{1}{2\log n}$.
Because
$
\lambda \mid \beta \sim 
\mathrm{Gamma}\!\left(\frac{\nu+1}{2},
                     \frac{\nu + \beta^2/\tau^2}{2}\right),
$
we have
$
\mathbb{E}[\lambda\mid\beta]
= \frac{\nu+1}{\nu + \beta^2/\tau^2}.
$
Near the posterior mode, $\beta^2 \sim n(\log n)^2$, thus $\lambda/\tau^2\sim \frac{1}{n(\log n)^2} $ is negligible compared to $X\tran\Omega X$. So we have
\[
\Var{\beta\mid \omega} \sim (X\tran\Omega X)^{-1}
\sim \log n,
\]
which means the step-size in $\beta$ is of order $O(\sqrt{\log n})$.
\end{itemize}
In summary, the posterior width in $\beta$ is $O(\sqrt{n \log n})$,
but the Pólya-Gamma Gibbs step-size in $\beta$ is $O(\sqrt{\log n})$.
The step-size is smaller than the posterior width by a factor of $\sqrt{n}$, causing slow mixing of the Gibb sampler.

\end{document}